\newtheorem{theorem}{Theorem}[section]
\newtheorem{lemma}[theorem]{Lemma}
\newtheorem{proposition}[theorem]{Proposition}
\theoremstyle{definition}
\newtheorem{definition}[theorem]{Definition}
\newtheorem{remark}[theorem]{Remark}
\newtheorem{example}[theorem]{Example}
\newcommand\eg{\emph{e.g.}}
\newcommand\ie{\emph{i.e.}}
\newcommand\wrt{w.r.t.}
\newcommand\G{Gröbner}
\newcommand\diff[1]{\partial_{#1}}
\newcommand\D{\mathcal{D}}
\newcommand\basis{\mathscr{B}}
\newcommand\ordS{<_S}
\newcommand\ordR{<_R}
\newcommand\Rrho{R_{\prec\rho}}
\DeclareMathOperator{\supp}{supp}
\DeclareMathOperator{\lm}{lm}
\DeclareMathOperator{\lc}{lc}
\DeclareMathOperator{\id}{id}
\newcommand\K{\mathbb{K}}
\newcommand\Q{\mathbb{Q}}
\newcommand\R{\mathbb{R}} 
\newcommand\N{\mathbb{N}}
\newcommand\QX{\mathbb{Q}[x_1,\cdots,x_n]}
\newcommand\QXX{\mathbb{Q}(x_1,\cdots,x_n)}
\newcommand\Weyl[1]{B_{#1}(\Q)}
\newcommand\monBasis{\Mon(\Delta)}
\newcommand\Span[1]{\langle #1\rangle}
\DeclareMathOperator{\Mon}{Mon}
\DeclareMathOperator{\im}{im}
\newcommand\rewR{\to_R}
\newcommand\parS{\twoheadrightarrow_S}
\newcommand\parTheta[1]{\twoheadrightarrow_{\Theta,#1}}
\newcommand\transRew{\overset{*}{\to}}
\newcommand\transR{\overset{*}{\to}_R}
\newcommand\transS{\overset{*}{\twoheadrightarrow}_S}
\newcommand\rewEquiv{\overset{*}{\leftrightarrow}}
\newcommand\equivR{\overset{*}{\leftrightarrow}_R}
\DeclareMathOperator{\SNF}{{\it S}-NF}
\newcommand\rewTheta{\to_\Theta}
\newcommand\transTheta{\overset{*}{\to}_\Theta}
\newcommand\divInv[1]{\mid_{#1}}
\newcommand\RTheta{R_{\Theta}}
\newcommand\SThetaL{S_{\Theta,L}}
\newcommand{\SThetaNF}[1]{{\it S}_{\Theta,#1}\operatorname{-NF}}
\DeclareMathOperator{\SThetaLNF}{{\it S}_{\Theta,L}-NF}
\begin{document}

\title{Strategies for linear rewriting systems:\\[0.2cm]
 link with parallel rewriting and involutive divisions\vspace{0.5cm}}
\author{Cyrille Chenavier\footnote{
    Johannes Kepler University, Institute for Algebra,
    cyrille.chenavier@jku.at.
  }\and Maxime Lucas\footnote{Inria Rennes - Bretagne Atlantique,
    Gallinette team, maxime.lucas@inria.fr.}
  }
\date{}

\maketitle
      
\begin{abstract}
  We study rewriting systems whose underlying set of terms is equipped
  with a vector space structure over a given field. We introduce parallel
  rewriting relations, which are rewriting relations compatible with the
  vector space structure, as well as rewriting strategies, which consist
  in choosing one rewriting step for each reducible basis element of the
  vector space. Using these notions, we introduce the $S$-confluence
  property and show that it implies confluence. We deduce a proof of the
  diamond lemma, based on strategies. We illustrate our general framework
  with rewriting systems over rational Weyl algebras, that are vector
  spaces over a field of rational functions. In particular, we show that
  involutive divisions induce rewriting strategies over rational Weyl
  algebras, and using the $S$-confluence property, we show that
  involutive sets induce confluent rewriting systems over rational Weyl
  algebras.
\end{abstract}
\noindent
\begin{small}\textbf{Keywords:} confluence, parallel rewriting, rewriting
  strategies, involutive divisions.\\[0.2cm]
  \textbf{M.S.C 2010 - Primary:} 13N10, 68Q42. \textbf{Secondary:} 12H05,
  35A25.
\end{small}

\tableofcontents

\section{Introduction}

Rewriting systems are computational models given by a set of syntactic
expressions and transformation rules used to simplify expressions into
equivalent ones. Since rewriting theory is applicable to different
problems of mathematics and computer science, it was developed for many
syntaxes of terms, \eg, strings, ($\Sigma-$, higher-order, infinitary)
terms, graphs, (commutative, noncommutative, vectors of) polynomials,
(linear combinations of) trees, (higher-dimensional) cells. Abstract
rewriting theory unifies these contexts and provides universal
formulations of rewriting properties, such as termination, normalisation
and (local) confluence. Newman's lemma is one of the most famous results 
of abstract rewriting and asserts that under termination hypothesis, 
local confluence implies confluence.
\medskip

In the context of rewriting over algebraic structures, the Newman's lemma
is used in conjunction with the critical pairs lemma to algorithmically
prove confluence. This is something fundamental since confluent rewriting
systems provide methods for solving decision problems, computing (linear,
homotopy) bases, Hilbert series, or free
resolutions~\cite{MR846601, GuiraudHoffbeckMalbos19, MR2964639,
  MR1072284, MR1360005}. From these methods, one get constructive proofs
of theoretical results, such as  embedding, coherence or homological
theorems~\cite{MR506890, MR0506423, MR3347996, MR3742562, MR265437,
  MR920522}, but also applications to problems coming from topics
modelled by algebra, such as cryptography, analysis of (ordinary
differential, partial derivative, time-delay) equations or control
theory. For instance, many informations of functional equations may be
read over free resolutions: integrability conditions, parametrization of
solutions, existence of autonomous curves~\cite{MR2233761, MR1308976}.
\medskip

When one considers algebraic structures with underlying vector space
operations, the conjunction of Newman's lemma and the critical pairs
lemma is traditionally known under the name of the diamond lemma. In
practice, this lemma is used to test if a generating set of a polynomial
ideal is a \G\ basis, since confluent linear rewriting systems are
usually induced by \G\ bases or one of their numerous adaptations to
different classes of algebras or operads~\cite{MR506890, MR2202562,
  MR2667136, MR1044911, MR1299371}. As an illustration of theses classes,
let us mention polynomial Weyl algebras that are models of differential
operators with polynomial coefficients. These algebras are composed of
polynomials over two sets of $n$ variables, the state variables
$x_1,\ldots,x_n$ and the vector field variables
$\partial_1,\ldots,\partial_n$, and submitted to the commutation rules
\[ \forall 1\leq i\neq j\leq n: \qquad x_ix_j=x_jx_i,
\quad \partial_i\partial_j=\partial_j\partial_i, \quad \partial_ix_j=
x_j\partial_i, \quad\partial_ix_i=x_i\partial_i+1. \]
These relations represent classical rules from differential calculus: the
second one means that second order derivatives of smooth functions
commute, the third one means that $x_j$ is constant for differentiation
with respect to $x_i$ and the last one represents Leibniz's rule for
differentiation with respect to $x_i$, that is,
$\partial_i(x_if)=x_i\partial_i(f)+f$, for any smooth function $f$.
Rewriting over vector spaces requires to introduce a notion of
well-formed rewriting step, also called {\em positive rewriting step}
in~\cite{GuiraudHoffbeckMalbos19}, that is, each step reduces only one
basis element together with its coefficient in a given vector. Typically,
the vector is a polynomial and the reduced basis element is a monomial
that appears in this polynomial with a nonzero coefficient. Doing so
avoids pathological situations, \eg, $v_1\to v_2$ implies that the
rewriting step $v_2=v_2-v_1+v_1\to v_2-v_2+v_1=v_1$ is not well-formed.
On the other hand, well-formed rewriting steps are not compatible with
vector space operations since as soon as two different basis elements are
rewritten in the well-formed rewriting steps $u_1\to u_2$ and
$v_1\to v_2$, then $u_1+v_1\to u_2+v_2$ is not well-formed.
\medskip

The notion of well-formed rewriting step is specific to rewriting systems over
vector spaces and, as mentioned above, it is not compatible with the underlying
algebraic operations. This lack of compatibility makes the theory of linear
rewriting rather painful, for instance the proof of the critical pair lemma is
more involved than for string rewriting (see~\cite[Theorem
4.2.1]{GuiraudHoffbeckMalbos19}). In our view, these observations call for the
development of a theory of linear rewriting free from well-formed rewriting
steps. In the long run, we hope that this will contribute to bridge the gap
between abstract and linear rewriting, for instance to find a common proof of
Newman's lemma and of the diamond lemma.

\begin{center}
  {\large\bf Our results}
\end{center}

In the present paper, we introduce an alternative approach to rewriting theory
over vector spaces, which does not use the notion of well-formed rewriting
step. Instead, our rewriting steps only depend on vector spaces operations. A
nice feature is that we allow the reduction of many basis elements at once,
while still avoiding pathological situations. From this approach, we get a
method for constructing bases of quotient vector spaces. We also relate our
approach to the classical one based on well-formed rewriting steps. In
particular, we get a confluence criterion which extends the Diamond
Lemma. Moreover, our framework is valid in every vector space and can be applied
to the case where the coefficients do not commute with basis elements. We
illustrate this last point with rewriting systems over rational Weyl algebras,
and from this, we show that so-called {\em involutive bases} \cite{MR1627129}
induce confluent rewriting systems.  \medskip

Consider a vector space $V$ with a basis $\basis$, and a set $R$ of
rewriting rules of the form $e\to v$, where $e\in\basis$ and $v\in V$.
Our approach to rewriting over vector spaces first consists in selecting,
for each $e\in\basis$, at most one rewriting rule with left-hand side
$e$. The set of such chosen rules is denoted by $S$. We consider the
parallel rewriting relation induced by $S$ whose steps are of the form
$v_1\parS v_2$ and are defined in such  a way that all selected left-hand
sides of elements of $S$ occurring in $v_1$ are replaced by the
corresponding right-hand sides to get $v_2$. As mentioned above, the
definition of parallel rewriting is purely internal to the category of
vector spaces and does not require any notion of well-formed rewriting
step. We associate to $S$ a preorder, denoted $\ordS$. In the case where $\ordS$ is
well-founded, we say that $S$ is a {\em strategy}. Following the ideas
of~\cite{GuiraudHoffbeckMalbos19}, and contrary to rewriting with \G\
bases, rewriting with strategies does not require any monomial order but
uses the order induced by the rewriting process itself. Given a strategy
$S$, we introduce the $S$-confluence property that asserts that for every
rule $e\rewR v$ in $S$ and all other rule $e\rewR u$ of $R$ that is not
in $S$, then $u$ and $v$ are joinable using~$\parS$. In
Theorem~\ref{thm:SNF_decompo}, we show under $S$-confluence hypothesis,
a basis of $V$ modulo the subspace generated by elements $e-v$, where
$e\rewR v\in R$, admits as a basis elements of $\basis$ that are not
left-hand sides of $S$. We also relate this approach based on strategies
to the classical one based on well-formed rewriting steps. In particular,
we show in Theorem~\ref{thm-S-conf} that $S$-confluence implies
confluence of the rewriting relation defined using only well-formed
rewriting steps. As a consequence, we give in
Theorem~\ref{thm:diamond_lemma} a new proof of the diamond lemma based on
strategies.
\medskip

Since our approach works for arbitrary vector spaces, it may be declined
in different classes of algebras over fields, including rational Weyl
algebras. They are composed of differential operators with coefficients in
the field of rational functions. Unlike polynomial Weyl algebras, which
are vector spaces over the field of constants and modules over the ring
of polynomials, rational Weyl algebras are vector spaces over the field
of rational functions. Notice that because of Leibniz's rule,
rational functions do not commute with operators. Rewriting-like methods
in this context yield applications to formal analysis of linear systems
of ordinary differential or partial derivative equations as mentioned
above. In particular, involutive divisions, such as {\em Janet, Thomas}
and {\em Pommaret} divisions, provide deterministic techniques to rewrite
differential operators. By determinism, we mean that each differential
operator admits at most one involutive divisor, that is, it may be
rewritten into at most one other differential operator. This determinism
is strongly related to our notion of strategy and parallel reductions. In
particular, we show in  Theorem~\ref{thm:involutive_conf} that involutive
bases induce $S$-confluent rewriting systems. Finally, we show how most
of the axioms of involutive divisions may be formalised in purely
rewriting language using rewriting strategies.
\medskip

\paragraph{Organisation.}

In Section~\ref{sec:rewriting_strategies_over_vector_space}, we present
our general framework and results for rewriting systems over vector
spaces. In Section~\ref{sec:confluence_relative_to_a_strategy}, we 
introduce rewriting strategies and the $S$-confluence property. We show
that $S$-confluence guarantees that bases of quotient vector spaces are
induced by normal forms basis elements for the strategy. In
Section~\ref{sec:strategies_for_traditional_rewriting_relations}, we
relate rewriting strategies to the classical approach to rewriting theory
on vector spaces. In particular, we show that $S$-confluence implies
confluence of the classical rewriting relation, and obtain a new proof of
the Diamond Lemma. In
Section~\ref{sec:rewriting_strategies_over_rational_Weyl_algebras}, we
illustrate our general framework by rewriting systems over rational 
Weyl algebras. In Section~\ref{sec:rewriting_systems_over_Weyl_algebras},
we introduce well-formed rewriting systems over rational Weyl algebras. 
In Section~\ref{sec:involutive_divisions_and_strategies}, we recall the
definition of an involutive division and of involutive sets of operators
and show that involutive divisions define rewriting strategies, and that
involutive sets $S$-confluent rewriting relations.
\medskip

\paragraph{Terminology and conventions.}

Throughout the paper, we use the standard terminology and conventions of
rewriting theory~\cite{MR1629216}. An {\em abstract rewriting system} is
a pair $(A,\to)$, where $A$ is a set and $\to$ is binary relation on $A$,
called {\em rewriting relation}. An element~$(a,b)\in\to$ is written
$a\to b$ and is called a {\em rewriting step}. A {\em normal form} for
$\to$ is an element $a\in A$ such that there is no $b\in A$ such that
$a\to b$. We denote by $\transRew$  (respectively,~$\rewEquiv$) the
closure of $\to$ under transitivity and reflexivity (respectively, and
symmetry). The equivalence class of $a\in A$ modulo the equivalence
relation $\rewEquiv$ is written $[a]_{\rewEquiv}$ and the set of all
equivalence classes is written $A/\rewEquiv$. When $a\transRew b$, that
is, there exists a (possibly empty) finite sequence of rewriting steps
from $a$ to  $b$, we say that $a$ {\em rewrites} into $b$. The rewriting
relation $\to$ is said to be {\em confluent} if whenever $a\transRew b$
and $a\transRew c$, then $b$ and $c$ are {\em joinable}, that is, there
exists $d$ such that $b\transRew c$ and $c\transRew d$. The confluence
property is equivalent to the {\em Church-Rosser property} that asserts
that whenever $a\rewEquiv b$, then~$a$ and $b$ are joinable.
\medskip

We also recall the notion of support of a vector in a given basis of a
vector space. Given a vector space $V$ over the ground field $\K$ and a
$\basis$ of $V$, every vector $v$ admits a unique finite decomposition
with respect to the basis $\basis$ and coefficients in the ground field:

\begin{equation}\label{equ:vector_decompo}
  v=\sum\lambda_ie_i,\quad\lambda_i\neq 0.
\end{equation}
The set of basis elements which appear in the decomposition
\eqref{equ:vector_decompo} is called the {\it support} of $v$ and is
written $\supp(v)$.
\medskip

\paragraph{Acknowledgement.} This work was supported by the 
Austrian  Science  Fund  (FWF):  P 32301. 

\section{Rewriting strategies over vector spaces}
\label{sec:rewriting_strategies_over_vector_space}

In this section, we introduce rewriting strategies over vector spaces and
relate them to the traditional approach to rewriting theory over vector
spaces. 
\medskip

Throughout the section, we fix a ground field $\K$, a $\K$-vector space $V$, and
a basis $\basis$ of $V$. We respectively call elements of $V$ and $\basis$
vectors and basis elements. We fix a subset $R$ of $\basis\times V$. The set $R$
represents a set of rewriting rules over~$V$, and for this reason, we denote its
elements by $e\rewR v$ instead of $(e,v)$ and call them \emph{rewriting
  rules}. We denote by $\Span{R}$ the subspace of $V$ spanned by elements $e-v$,
where $e\rewR v$ is a rewriting rule. We write $V/\Span{R}$ for the quotient of
$V$ by $\Span{R}$.

\subsection{Confluence relative to a strategy}
\label{sec:confluence_relative_to_a_strategy}

In this section, we define rewriting strategies as well as the $S$-confluence
property, and show that under $S$-confluence hypothesis, a basis of $V/\Span{R}$
may be described in terms of normal-form-like basis vectors.  \medskip

\medskip

We fix a subset~$S$ of~$R$ with pairwise distinct left-hand sides, that is,
given two rules $e\rewR v$ and $e'\rewR v'$ in $S$ such that $e=e'$, we have
$e\rewR v=e'\rewR v'$. Let us consider the endomorphism $r_S:V\to V$ defined by
$r_S(e)=v$ if there exists a rule $e\rewR v$ in $S$ with left-hand side $e$, and
$r_S(e)=e$ if there is no rule in $S$ with left-hand side $e$. We consider the
rewriting relation $\parS$ on $V$ defined by $u\parS u'$ whenever
$u'=r_S(u)$. In the sequel, we refer this relation as being the parallel
rewriting relation induced by $S$; the terminology parallel makes explicit that
all basis vectors in $\supp(u)$ are reduced at once with rules in $S$ and
justifies the double head arrows in $\parS$. Notice that the parallel rewriting
relation is stable under vector spaces operations, that is, for every rewriting
steps $u\parS u'$, $v\parS v'$, and for every scalar $\lambda\in\K$, there exists a
rewriting step $u+\lambda v\parS u'+\lambda v'$. Moreover, also notice that
$\parS$ is deterministic in the sense that for every vector $u$, there is
exactly one~$u'$ such that $u\parS u'$. Let us define the preorder of $S$ as
being the transitive closure of the binary relation defined by $e'\ordS e$ if
there exists an $n > 0$ such that $e'\in\supp(r_S^n(e))$ and $r_S^n(e) \neq e$, where
$r_S^n$ is the $n$-th composition of $r_S$. The preorder of $S$ is also
written $\ordS$.
\smallskip

\begin{definition}\label{def:strategies}
  A {\em prestrategy} for $R$ is a subset $S$ of $R$ with pairwise
  distinct left-hand sides. A {\em strategy} for $R$ is a prestrategy
  such that the preorder $\ordS$ of that prestrategy $S$ is well-founded. 
\end{definition}
\smallskip

Notice that if $S$ is a strategy for $R$, then $\ordS$ is a well-founded
order. Moreover, every vector $u$ admits an {\em S-normal form}, that is
a vector $u'$ such that $u\transS u'$ and $r_S$ acts trivially on $u'$.
Indeed, since $\ordS$ is well-founded, the sequence $(r_S^n(u))_n$ is
stationary and $u'$ is the limit of this sequence. Moreover, since
$\parS$ is deterministic, this $S$-normal form is unique and we denote it
by~$\SNF(u)$. 
\smallskip

\begin{example}\label{ex:strategies_step_1}
  Let us illustrate (pre)strategies with the following
  (counter-)examples.
  \begin{enumerate}
  \item\label{it:ex_strat_1} Assume that $V$ is $4$-dimensional with
    basis $\basis=\{e_1,e_2,e_3,e_4\}$ and let us consider the following
    set of rewriting rules:
    \[R=\{e_1\rewR e_2,\quad e_2\rewR e_3+e_4,\quad e_3\rewR e_2-e_4\}.\]
    Considering the prestrategy
    \[S=\{e_1\rewR e_2,\quad e_2\rewR e_3+e_4\}\subset R,\]
    the preorder $\ordS$ is well-founded since $e_1>_Se_2>_Se_3,e_4$, and
    $e_3,e_4$ are minimal. Hence,~$S$ is a strategy for $R$. The unique
    $S$-normal form of $u=\lambda_1e_1+\cdots+\lambda_4e_4$ is computed
    as follows:
    \[u\parS\lambda_1e_2+(\lambda_2+\lambda_3)e_3+
    (\lambda_2+\lambda_4)e_4\parS(\lambda_1+\lambda_2+\lambda_3)e_3+
    (\lambda_1+\lambda_2+\lambda_4)e_4,\]
    which yields:
    \[\SNF(u)=(\lambda_1+\lambda_2+\lambda_3)e_3+
    (\lambda_1+\lambda_2+\lambda_4)e_4.\]
  \item\label{it:c-ex_strat_1} Assume that $V$ is $3$-dimensional with
    basis $\basis=\{e_1,e_2,e_3\}$, let us consider the set of rewriting
    rules:
    \[R=\{e_1\rewR e_2+e_3,\quad e_2\rewR e_1,\quad e_3\rewR -e_1\},\]
    and $S=R$. Then, $S$ is not a strategy since the preorder $\ordS$ is
    cyclic: from the rewriting sequence $e_2\parS e_1\parS e_2+e_3$, we
    get $e_2>_Se_3$, so that $e_1>_S e_2>_S e_3>_S e_1>_S\cdots$. Notice
    however that each vector $u=\lambda_1e_1+\lambda_2e_2+\lambda_3e_3$
    admits a unique $S$-normal form which is $0$:
    \[u=\lambda_1e_1+\lambda_2e_2+\lambda_3e_3\parS(\lambda_2-\lambda_3)
    e_1+\lambda_1e_2+\lambda_1e_3\parS(\lambda_2-\lambda_3)e_2+(\lambda_2-
    \lambda_3)e_3\parS 0.
    \]
  \item\label{it:case_N} Let $V$ be the vector space with basis
    $\basis= \mathbb N$ and consider the set of rewriting rules:
    \[R=\{n\rewR n+1:\quad n\in\N\}.\]
    Then, a prestrategy $S$ corresponds to a subset $E$ of $\mathbb N$.
    Moreover, such a prestrategy is a strategy if and only if for all
    $n\in E$, there exists $k\in \mathbb N$ such that $n + k \notin E$. 
  \end{enumerate}
\end{example}
\smallskip

From now on, we fix a strategy $S$ for $R$ and let us denote by $\SNF$
the map from $V$ to itself that maps any vector $u$ to $\SNF(u)$.
\smallskip

\begin{proposition}\label{prop:SNF_projector}
  The map $\SNF$ is a linear projector.
\end{proposition}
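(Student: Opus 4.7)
The plan is to verify separately the two defining properties of a linear projector: linearity of $\SNF$, and idempotency $\SNF \circ \SNF = \SNF$. Both rest on two facts already recorded in the excerpt: the map $r_S$ is a linear endomorphism of $V$, since it is defined on the basis $\basis$ and extended by linearity; and by well-foundedness of $\ordS$, the sequence $(r_S^n(u))_n$ is eventually constant for every $u \in V$, its common value being $\SNF(u)$.

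For linearity, given $u,v \in V$ and $\lambda \in \K$, I would choose $N \in \N$ large enough that $r_S^N(u) = \SNF(u)$ and $r_S^N(v) = \SNF(v)$ simultaneously; such an $N$ exists by taking the maximum of the two individual stabilisation indices. Since $r_S^N$ is a composition of linear maps it is itself linear, so
\[ r_S^N(u + \lambda v) = r_S^N(u) + \lambda r_S^N(v) = \SNF(u) + \lambda \SNF(v). \]
The right-hand side is a fixed point of $r_S$: by linearity,
\[ r_S(\SNF(u) + \lambda \SNF(v)) = r_S(\SNF(u)) + \lambda r_S(\SNF(v)) = \SNF(u) + \lambda \SNF(v), \]
using that $\SNF(u)$ and $\SNF(v)$ are themselves fixed by $r_S$. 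Thus $\SNF(u) + \lambda \SNF(v)$ is an $S$-normal form reachable from $u + \lambda v$ under the iterates of $r_S$. By uniqueness of the $S$-normal form, already noted as a consequence of determinism of $\parS$, this forces $\SNF(u + \lambda v) = \SNF(u) + \lambda \SNF(v)$.

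Idempotency is essentially immediate: since $r_S$ acts trivially on $\SNF(u)$ by construction, the sequence $(r_S^n(\SNF(u)))_n$ is constantly equal to $\SNF(u)$, whence $\SNF(\SNF(u)) = \SNF(u)$.

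The only point that calls for care is the synchronisation of stabilisation indices for $u$ and $v$ in the linearity step, needed because $\SNF$ is not realised by a single finite iterate of $r_S$ uniform in the input vector; everything else is formal once linearity of $r_S$ and uniqueness of $S$-normal forms are in hand.
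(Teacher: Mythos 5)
Your proof is correct and follows essentially the same route as the paper's: idempotency from $r_S$ fixing $\SNF(u)$, and linearity by choosing a common stabilisation index $N$ so that $r_S^N(u+\lambda v)=\SNF(u)+\lambda\SNF(v)$. Your explicit verification that the right-hand side is a fixed point of $r_S$, combined with uniqueness of $S$-normal forms, just spells out a step the paper leaves implicit.
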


\begin{proof}
  For every vector $u\in V$, we have $r_S(\SNF(u))=\SNF(u)$, which proves 
  $\SNF\circ\SNF=\SNF$. Moreover, given another vector $v\in V$, let
  $k\in\N$ be an integer such that $\SNF(u)=r_S^k(u)$ and
  $\SNF(v)=r_S^k(v)$. For every scalar $\lambda\in\K$, we have 
  \[r^k_S(u+\lambda v)=r^k_S(u)+\lambda r^k_S(v)=\SNF(u)+\lambda\SNF(v).
  \]
  Hence, $\SNF(u+\lambda v)=r^k_S(u+\lambda v)$ is equal to
  $\SNF(u)+\lambda\SNF(v)$, which proves that $\SNF$ is linear.
\end{proof}

\smallskip

Now, we introduce the $S$-confluence property. 
\smallskip

\begin{definition}\label{def:standardisation_property}
  Given a strategy $S$ for $R$, we say that $R$ is \emph{S-confluent} if
  for every rewriting rule $e\rewR v$ in $R$, we have $\SNF(e-v)=0$.
\end{definition}
\smallskip

In the following theorem, we show that under $S$-confluence hypothesis,
the operator $\SNF$ induces the natural projection from $V$ to
$V/\Span{R}$.
\medskip

\begin{theorem}\label{thm:SNF_decompo}
  Let $R$ be a set of rewriting rules and let $S$ be a strategy for $R$.  If $R$
  is $S$-confluent, then we have the following equality and isomorphism of
  vector spaces:
  \begin{equation}\label{equ:iso_SNF}
    V/\Span{R} \quad = \quad V/\Span{S} \quad \simeq \quad\im(\SNF).
  \end{equation}
  In particular, $\{e+\Span{R}:\ \SNF(e)=e\}$ is a basis of
  $V/\Span{R}$.
\end{theorem}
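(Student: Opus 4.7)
The plan is to establish the theorem in four steps: show $\Span{R} = \Span{S}$; identify $\ker(\SNF)$ with this common subspace; invoke Proposition~\ref{prop:SNF_projector} to get the isomorphism with $\im(\SNF)$; and finally pin down a basis of $\im(\SNF)$ inside $\basis$.

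First, I want to prove the key auxiliary fact that for every $u \in V$ one has $u - \SNF(u) \in \Span{S}$. Writing $u = \sum_{e \in \basis} \lambda_e e$, linearity of $r_S$ gives $u - r_S(u) = \sum_{e} \lambda_e (e - r_S(e))$, and the only contributing terms are those where $e$ is a left-hand side of some rule $e \rewR v$ in $S$, so that $e - r_S(e) = e - v \in \Span{S}$. Hence $u - r_S(u) \in \Span{S}$ for every $u$. Telescoping $u - \SNF(u) = \sum_{k \geq 0}(r_S^k(u) - r_S^{k+1}(u))$ (a finite sum, since $\ordS$ is well-founded) yields $u - \SNF(u) \in \Span{S}$. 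Combined with $S$-confluence, this gives the equality $\Span{R} = \Span{S}$: one inclusion is trivial, and for the other, if $e \rewR v \in R$ then $(e-v) = (e-v) - \SNF(e-v) \in \Span{S}$.

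Second, I want to identify $\ker(\SNF) = \Span{S}$. The inclusion $\Span{S} \subseteq \ker(\SNF)$ is immediate from $S$-confluence applied to rules in $S$ (or directly: if $e \rewR v \in S$, then $r_S(e) = v$, so by Proposition~\ref{prop:SNF_projector} $\SNF(e - v) = \SNF(e) - \SNF(v) = \SNF(v) - \SNF(v) = 0$). The converse is precisely the auxiliary fact: if $\SNF(u) = 0$, then $u = u - \SNF(u) \in \Span{S}$. Since $\SNF$ is a linear projector onto $\im(\SNF)$, the first isomorphism theorem yields $V/\Span{R} = V/\Span{S} = V/\ker(\SNF) \simeq \im(\SNF)$, with the isomorphism sending $e + \Span{R}$ to $\SNF(e)$.

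Third, and this is the main technical step, I want to show that $\{e \in \basis : \SNF(e) = e\}$ is a basis of $\im(\SNF)$. Linear independence is automatic, and membership in $\im(\SNF)$ is by definition, so the content is spanning. Take $u \in \im(\SNF)$, so $r_S(u) = u$, and let $A = \{e \in \supp(u) : e \text{ is a left-hand side of } S\}$; I want to conclude $A = \emptyset$. Arguing by contradiction, well-foundedness of $\ordS$ lets me pick $e^*$ maximal in $A$, and I expand the coefficient of $e^*$ in $r_S(u) = u$: basis elements in $\supp(u) \setminus A$ are fixed by $r_S$ and cannot contribute $e^*$ (they differ from $e^*$, which is in $A$); the contribution from $e^*$ itself is zero because $\supp(r_S(e^*)) \subseteq \{e' : e' <_S e^*\}$; and the contribution from any other $e \in A$ would require $e^* \in \supp(r_S(e))$, hence $e^* <_S e$, contradicting maximality of $e^*$ in $A$. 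So the coefficient of $e^*$ in $r_S(u)$ is zero, contradicting $e^* \in \supp(u)$. Thus $A = \emptyset$ and every $e \in \supp(u)$ satisfies $r_S(e) = e$, i.e.\ $\SNF(e) = e$. Transporting this basis back across the isomorphism gives the announced basis $\{e + \Span{R} : \SNF(e) = e\}$ of $V/\Span{R}$.

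The main obstacle is the last spanning argument: the well-foundedness of $\ordS$ is used in an essential coefficient-chasing way, and one has to be careful that the maximal element is chosen within $A$ (not in all of $\supp(u)$), so that comparing it with other left-hand sides in $\supp(u)$ produces the contradiction. The rest is bookkeeping around the projector $\SNF$ and the telescoping identity.
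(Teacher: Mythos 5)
Your proof is correct and follows essentially the same route as the paper's: the key identity $u-\SNF(u)\in\Span{S}$ (obtained there by induction on the stabilisation index, here by telescoping), the $S$-confluence hypothesis giving $\Span{R}\subseteq\ker(\SNF)$, and the projector structure of $\SNF$ together yield $\ker(\SNF)=\Span{S}=\Span{R}$ and hence \eqref{equ:iso_SNF}. Your third step, the maximality argument showing that every element of $\supp(u)$ for $u$ fixed by $r_S$ is itself fixed, is a sound and welcome expansion of the spanning claim that the paper's proof merely asserts in its final sentence (the only caveat being the degenerate possibility of a rule $e\rewR e$ in $S$, for which your set $A$ should be taken as $\{e\in\supp(u):r_S(e)\neq e\}$; this is harmless since such an $e$ already satisfies $\SNF(e)=e$).
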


\begin{proof} 
  The equality and isomorphism of~\eqref{equ:iso_SNF} are equivalent to 
  the following equalities:
  \begin{equation}\label{equ:R=S=ker_SNF}
    \ker(\SNF)=\Span{S}=\Span{R}.
  \end{equation}
  Let us show $\ker(\SNF)\subseteq\Span{S}$. From
  Proposition~\ref{prop:SNF_projector}, $\SNF$ is a projector, so
  that its kernel is equal to the image of the operator $\id_V-\SNF$.
  Moreover, by definition of the map $r_S$, for every vector $u\in V$, we
  have $u-r_S(u)\in\Span{S}$, which gives $u-\SNF(u)\in\Span{S}$ by
  induction on the smallest integer~$n$ such that $r_S^n(u)=\SNF(u)$.
  Hence, $\ker(\SNF)=\im(\id_V-\SNF)$ is included in $\Span{S}$.
  Moreover, the inclusion $\Span{S}\subseteq\Span{R}$ follows from the
  fact that $S$ is included in $R$. Let us show that
  $\Span{R}\subseteq\ker(\SNF)$. From the $S$-confluence hypothesis, for
  every rewriting rule $e\rewR v$, we have $\SNF(e-v)=0$. Since all
  elements $e-v$, for $e\rewR v$, generate $\Span{R}$, we  deduce that
  the latter is included in $\ker(\SNF)$, which concludes the proof 
  of~\eqref{equ:R=S=ker_SNF}, and shows~\eqref{equ:iso_SNF}. The second
  assertion of the theorem is a consequence of the fact that $\im(\SNF)$
  has a basis composed by basis elements that are fixed by $\SNF$.
\end{proof}
\smallskip

\begin{example}\label{ex:S-conf}
  Let us continue Point~\ref{it:ex_strat_1} of
  Example~\ref{ex:strategies_step_1}. The following identities hold:
  \[\begin{split}
  \SNF(e_1)=e_3+e_4=\SNF(e_2),&\quad\SNF(e_2)=e_3+e_4=\SNF(e_3+e_4)
  \\[0.3cm]
  \SNF(e_3)=& \ e_3=\SNF(e_2-e_4),
  \end{split}
  \smallskip\]
  so that $R$ is $S$-confluent. Notice that if we replace the rule
  $e_3\rewR e_2-e_4$ in $R$ by $e_3\rewR e_2$, we get $\SNF(e_3)=e_3$ and
  $\SNF(e_2)=e_3+e_4$, so $\rewR$ is not $S$-confluent anymore. 
\end{example}
\smallskip

We finish this section with a characterisation of the $S$-confluence
property in terms o decreasing. This characterisation is used to prove
the diamond lemma in terms of strategies, see
Theorem~\ref{thm:diamond_lemma}. Given an order $\prec$ on $R$ and a
rewriting rule $\rho\in R$, we denote by $\Rrho$ the set of rules
$\rho'\in R$ such that we have $\rho'\prec\rho$. In particular,
$\Span{\Rrho}$ is the subspace of $V$ generated by all the elements
$e-v$, where $e\rewR v\in R$ is strictly smaller than $\rho$ for $\prec$.
\smallskip

\begin{definition}
  Let $\prec$ be an order on $R$. We say that $R$ is {\em decreasing}
  \wrt\ $(S,\prec)$ if for any rewriting rule $\rho=e\rewR v \in R$, we 
  have $v-r_S(e)\in\Span{\Rrho}$.
\end{definition}
\smallskip

\begin{proposition}\label{prop:decreasing}
  The set $R$ is $S$-confluent if and only if there exists a well-founded
  order $\prec$ on $R$ such that $S$ is decreasing \wrt\ $(S,\prec)$.
\end{proposition}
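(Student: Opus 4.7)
The plan is to prove the two implications separately, and I expect the forward direction to be surprisingly simple once one stops looking for a finely tuned order.

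For the reverse direction (decreasing implies $S$-confluence), I would fix a well-founded order $\prec$ making $R$ decreasing and show $\SNF(e - v) = 0$ for every rule $\rho = e \rewR v \in R$ by well-founded induction on $\prec$. Using linearity of $\SNF$ (Proposition~\ref{prop:SNF_projector}), I would decompose
\[ \SNF(e - v) \; = \; \SNF(e - r_S(e)) \; - \; \SNF(v - r_S(e)). \]
The first summand vanishes because $\SNF(u) = \SNF(r_S(u))$ for every $u \in V$ by the very construction of $\SNF$. For the second, the decreasing hypothesis gives $v - r_S(e) = \sum_i c_i (e_i - v_i)$ with each $e_i \rewR v_i \prec \rho$; the induction hypothesis yields $\SNF(e_i - v_i) = 0$ for each index, and linearity concludes.

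For the forward direction, I would not try to build a finely tuned order on $R$ but use the coarsest one that could work. Define $\prec$ on $R$ by $\rho' \prec \rho$ iff $\rho' \in S$ and $\rho \notin S$. This relation is a strict partial order (transitivity, irreflexivity and antisymmetry all hold vacuously), and it is well-founded because every strict descending chain has length at most two. To check that $R$ is decreasing with respect to $(S,\prec)$, I would split on whether $\rho = e \rewR v$ lies in $S$. If $\rho \in S$, then $r_S(e) = v$ by the definition of $r_S$, so $v - r_S(e) = 0$ trivially lies in $\Span{\Rrho}$. If $\rho \notin S$, then $\Rrho = S$, and I would rewrite
\[ v - r_S(e) \; = \; (e - r_S(e)) \; - \; (e - v). \]
The term $e - r_S(e)$ lies in $\Span{S}$ (it is either zero, or the defining generator $e - v_0$ for the unique $S$-rule $e \rewR v_0$ with left-hand side $e$), and $e - v$ lies in $\Span{R} = \Span{S}$, the equality following from Theorem~\ref{thm:SNF_decompo} under the $S$-confluence hypothesis. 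Hence $v - r_S(e) \in \Span{S} = \Span{\Rrho}$.

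The only real conceptual obstacle is resisting the temptation to record the entire reduction history of $e - v$ in a refined well-order on $R$. The equality $\Span{R} = \Span{S}$ furnished by $S$-confluence already provides witnesses for decreasing entirely inside $\Span{S}$, so no structure on $R \setminus S$ is needed beyond placing every $S$-rule strictly below every non-$S$-rule.
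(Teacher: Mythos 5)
Your proposal is correct and follows essentially the same route as the paper's own proof: the same two-level well-founded order separating $S$ from $R\setminus S$ in the forward direction (together with the decomposition $v-r_S(e)=(e-r_S(e))-(e-v)$ and the equality $\Span{R}=\Span{S}$ from Theorem~\ref{thm:SNF_decompo}), and the same well-founded induction on $\prec$ using linearity of $\SNF$ and $\SNF\circ r_S=\SNF$ in the reverse direction. Your formulation of the order ($\rho'\prec\rho$ iff $\rho'\in S$ and $\rho\notin S$) is in fact a slightly cleaner statement of what the paper intends.
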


\begin{proof}
  Assume that $R$ is $S$-confluent and let us define the order $\prec$ on
  $R$ by $\rho'\prec\rho$ if and only if~$\rho'\in R$ and $\rho\notin S$.
  This order is terminating since each strictly decreasing sequence has
  length $2$. Let $\rho=e\rewR v$ be a rewriting rule. We have
  $v-r_S(e)=(e-r_S(e))-(e-v)\in\Span{R}$, and from
  Theorem~\ref{thm:SNF_decompo}, if $R$ is $S$-confluent, then we have
  $\Span{R}=\Span{S}$, which yields $v-r_S(e)\in\Span{S}$. Moreover, 
  if~$\rho\in S$, then we have $v-r_S(e)=0$ and if not, we have
  $\Rrho=\Span{S}$. All together, we deduce that $R$ is decreasing \wrt\
  $(S,\prec)$.
  \smallskip
  
  Conversely, if $\prec$ is a well-founded order such that $R$ decreases
  \wrt\ $(S,\prec)$, then we get by induction that $\Span{R}$ is included
  in $\ker(\SNF)$. Indeed, if $\rho=e\rewR v\in R$ is a rewriting rule
  that is minimal for~$\prec$, that is $\Rrho=\emptyset$, then $v=r_S(e)$
  by decreasingness, so that $\SNF(e-v)=0$, and if $\rho$ is not minimal,
  by decreasingness we have $v-r_S(e)\in\Span{\Rrho}$, which is included
  in $\ker(\SNF)$ by induction hypothesis. Since the inclusion
  $\Span{R}\subseteq\ker(\SNF)$ is equivalent to the $S$-confluence
  property, the converse implication is shown.
\end{proof}
\smallskip

The proof of Proposition~\ref{prop:decreasing} indicates that for proving
$S$-confluence, we can choose the order that separates rewriting rules in
$S$ from the others. However, it may happen that other natural orders may
be used, as we will see in our proof of the diamond lemma.

\subsection{Strategies for traditional rewriting relations}
\label{sec:strategies_for_traditional_rewriting_relations}

In this section, we relate parallel rewriting to the traditional approach
to rewriting theory over vector spaces that consists in reducing one
basis element at each step. In particular, we show that the
$S$-confluence property implies confluence for traditional rewriting
relations, and we give a new proof of the diamond lemma, based on
$S$-confluence.
\medskip

As previously, we fix a set $R\subset\basis\times V$ of rewriting rules,
whose elements are written $e\rewR v$. Moreover, we also impose that for
every such rule, we have $e\notin\supp(v)$. These rules are extended into
a rewriting relation on $V$, still written $\rewR$, with rewriting steps
of the form
\begin{equation*}\label{equ:R_rewriting_step}
  \lambda e+u\rewR\lambda v+u,
  \smallskip
\end{equation*}
where $e\rewR v\in R$ is a rewriting rule, $\lambda$ is a nonzero scalar
and $u$ is a vector such that $e$ does not belong to $\supp(u)$. The
relation~$\rewR$ is not stable under vector spaces operations, since
$u_1\rewR u_2$, $v_1\rewR v_2$, and $\mu\in\K$ generally do not imply
$\mu u_1+v_1\rewR \mu u_2+v_2$. In contrast,
Proposition~\ref{prop:vs_structure} shows that $\equivR$ is compatible
with these operations. In the proof of this proposition, we use the
following lemma.
\smallskip

\begin{lemma}\label{lem:butterfly}
  If we have $u_1\equivR u_2$ and $v_1\equivR v_2$, then,
  $\mu u_1+v_1\equivR\mu u_2+v_2$ holds for every $\mu\in\K$.
\end{lemma}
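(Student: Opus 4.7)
The plan is to reduce the lemma to a single-step compatibility fact and then combine with a scaling argument via two applications of transitivity. The core technical ingredient is the auxiliary claim: \emph{for every vector $z\in V$, every rule $e\rewR v$ in $R$, and every nonzero scalar $\lambda\in\K$, one has $z\equivR z+\lambda(e-v)$.}

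Granted this claim, a single step $u_1\rewR u_2$ arising from $\lambda e+u\rewR\lambda v+u$ satisfies $u_1-u_2=\lambda(e-v)$, so for any $w\in V$ the claim applied to $z=u_2+w$ gives $u_1+w=(u_2+w)+\lambda(e-v)\equivR u_2+w$. An induction on the length of a zig-zag sequence of $\rewR$-steps (symmetry poses no difficulty, since $\equivR$ is already the symmetric closure) then shows that $u_1\equivR u_2$ implies $u_1+w\equivR u_2+w$ for every $w\in V$. For compatibility with scaling, if $\mu=0$ the statement is trivial, while if $\mu\neq 0$ each step $\lambda e+u\rewR\lambda v+u$ in a sequence joining $u_1$ and $u_2$ scales to the legitimate step $\mu\lambda e+\mu u\rewR\mu\lambda v+\mu u$, since $\mu\lambda\neq 0$ and $e\notin\supp(\mu u)=\supp(u)$; thus $\mu u_1\equivR\mu u_2$. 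Chaining the two observations yields
\[ \mu u_1+v_1\;\equivR\;\mu u_2+v_1\;\equivR\;\mu u_2+v_2. \]

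The main obstacle is the auxiliary claim, and the subtlety is that the definition of $\rewR$ is sensitive to whether the reduced basis element appears in the support: adding $\lambda(e-v)$ to $z$ may introduce the basis vector $e$ into the support, cancel it, or merely modify its coefficient. The expected resolution is a short case analysis on the coefficient $\beta$ of $e$ in $z$. Writing $z=\beta e+z'$ with $e\notin\supp(z')$, one has $z+\lambda(e-v)=(\beta+\lambda)e+(z'-\lambda v)$, and since $e\notin\supp(v)$ by the standing assumption on rules, $e\notin\supp(z'-\lambda v)$. If $\beta=0$, a single forward step rewrites $z+\lambda(e-v)$ directly back to $z$; if $\beta+\lambda=0$, a single forward step rewrites $z$ to $z+\lambda(e-v)$; and in the generic case where both $\beta$ and $\beta+\lambda$ are nonzero, both vectors rewrite in one step to the common target $\beta v+z'$, making them joinable. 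In every case one obtains $z\equivR z+\lambda(e-v)$, which closes the proof.
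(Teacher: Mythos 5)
Your proof is correct and follows essentially the same route as the paper's: reduce to a single rewriting step, write out the coefficient of the reduced basis element $e$, and use $e\notin\supp(v)$ to join the two sides (or relate them by one step in the degenerate cases), then induct on the length of the zig-zag. The only cosmetic difference is that you treat translation by $w$ and scaling by $\mu$ as two separate reductions chained by transitivity, whereas the paper folds both into a single computation showing $\mu u_1+v_1$ and $\mu u_2+v_1$ reduce to a common vector; both versions are sound.
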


\begin{proof}
  The proof is an adaptation
  of~\cite[Lemma 3.1.3]{GuiraudHoffbeckMalbos19}. We first show the
  following particular case:
  \begin{equation}\label{equ:sum_rew}
    u_1\rewR u_2\quad\Rightarrow\quad\mu u_1+v_1\equivR\mu u_2+v_1.
  \end{equation}
  By
  definition of $\rewR$, we have $u_1=\lambda e+u$ and $u_2=\lambda v+u$,
  where $e\rewR v$ is a rewriting rule,~$\lambda$ is a scalar and $e$
  does not belong to $\supp(u)$. Let $\nu$ be the coefficient of $e$ in
  $v_1$, so that we may write $v_1=\nu e+v_1'$, where $e$ does not belong
  to $\supp(v_1')$. Since $\mu u_1+v_1=(\mu\lambda+\nu)e+\mu u+v_1'$,
  $\mu u_2+v_1=\nu e+\mu\lambda v +\mu u+v_1'$, and, $e\notin\supp(v)$,
  we have $\mu u_1+v_1\transR(\mu\lambda+\nu)v+\mu u+v_1'
  \overset{*}{\leftarrow}\mu u_2+v_1$, which proves~\eqref{equ:sum_rew}.
  If $u_1\equivR u_2$, using~\eqref{equ:sum_rew}, an induction on the
  length of the path $u_1\equivR u_2$ shows that
  $\mu u_1+v_1\equivR\mu u_2+v_1$, and by an analogous argument, we have
  $\mu u_2+v_1\equivR\mu u_2+v_2$. Hence, we have
  $\mu u_1+v_1\equivR\mu u_2+v_2$, which concludes the proof.
\end{proof}
\smallskip

\begin{proposition}\label{prop:vs_structure}
  Given two vectors $u,u'\in V$, we have $u\equivR u'$ if and only if
  $u + \Span{R} = u'+\Span{R}$.
\end{proposition}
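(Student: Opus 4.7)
The proof plan is to establish the two implications separately, with Lemma~\ref{lem:butterfly} playing the decisive role in one direction.

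For the implication $u\equivR u'\Rightarrow u-u'\in\Span{R}$, the strategy is a straightforward induction on the length of a zigzag path witnessing $u\equivR u'$. The base case $u=u'$ is trivial since $0\in\Span{R}$. For the inductive step, suppose $u\rewEquiv u''$ and $u''$ is related to $u'$ by a single elementary step (in either direction). By definition of $\rewR$, the difference $u''-u'$ is of the form $\pm\lambda(e-v)$ for some rule $e\rewR v\in R$ and scalar $\lambda\in\K$, hence lies in $\Span{R}$. Combined with $u-u''\in\Span{R}$ from the induction hypothesis, this yields $u-u'=(u-u'')+(u''-u')\in\Span{R}$.

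For the converse $u-u'\in\Span{R}\Rightarrow u\equivR u'$, the idea is to exploit the compatibility of $\equivR$ with vector space operations that was just proved in Lemma~\ref{lem:butterfly}. Write
\[
u-u'=\sum_{i=1}^k\lambda_i(e_i-v_i),
\]
with each $e_i\rewR v_i$ a rule in $R$. Each such rule gives $e_i\equivR v_i$, and applying Lemma~\ref{lem:butterfly} with the scalar $\mu=\lambda_i$ (and with the second pair both equal to $0$) yields $\lambda_ie_i\equivR\lambda_iv_i$, i.e., $\lambda_i(e_i-v_i)\equivR 0$. A second round of applications of the same lemma (now with $\mu=1$) lets us sum these equivalences inductively, giving $\sum_i\lambda_i(e_i-v_i)\equivR 0$, that is, $u-u'\equivR 0$. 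A final application of the lemma (adding $u'$ to both sides) yields $u\equivR u'$.

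I do not expect a genuine obstacle here: the first direction is a routine induction, and the second is essentially a bookkeeping exercise once Lemma~\ref{lem:butterfly} is available. The only delicate point is making sure that the iterated applications of Lemma~\ref{lem:butterfly} — first to scale each generator by $\lambda_i$, then to add the resulting equivalences together, then to translate the whole sum by $u'$ — are organised cleanly, but this is exactly the kind of bookkeeping that the butterfly lemma was designed to absorb.
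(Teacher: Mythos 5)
Your proposal is correct and follows essentially the same route as the paper: the forward direction is the observation that a single $\rewR$-step changes a vector by an element of $\Span{R}$ (extended to $\equivR$ by induction on the zigzag, where the paper instead invokes minimality of the generated equivalence relation), and the converse decomposes $u-u'$ over the generators $e-v$ and assembles the equivalences $e\equivR v$ via Lemma~\ref{lem:butterfly}. The extra bookkeeping you describe for scaling, summing, and translating is exactly what the paper leaves implicit in its one-line appeal to that lemma.
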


\begin{proof}
  By definition of the rewriting relation $\rewR$, if $u\rewR u'$, then
  $u + \Span{R} =u'+\Span{R}$. Since $\equivR$ is the smallest equivalence relation
  that contains $\rewR$, we deduce that $u\equivR u'$ also implies
  $u+\Span{R}=u'+\Span{R}$. Conversely, let us write $u-u'=\sum\mu(e-v)$, where
  $\mu$ are scalars and $e\rewR v$ are rewriting rules. Since
  $e\equivR v$ holds for each term of the sum, Lemma~\ref{lem:butterfly}
  implies that $u\equivR u'$, which concludes the proof.
\end{proof}
\smallskip

As a consequence of the previous proposition, $V/\equivR = V/\Span{R}$
admits a vector space structure, given by the operations
\[[u]_{\equivR}+\lambda[v]_{\equivR}=[u+\lambda v]_{\equivR}.\smallskip\]
\noindent
Notice that Proposition~\ref{prop:vs_structure} is not true if we do not
assume that $e\notin\supp(v)$ holds for every rewriting rule~$e\rewR v$: 
for instance, if we have only one rule $e\rewR 2e$, then $e\in\Span{R}$ 
but $[e]_{\equivR}\neq[0]_{\equivR}$. We point out that the assumption
$e\notin\supp(v)$ was explicitly used in the proof of
Lemma~\ref{lem:butterfly}.
\medskip

In Theorem~\ref{thm-S-conf}, we prove a confluence criterion for $\rewR$
in terms of strategies. Before that, we need the following lemma, that
relates parallel rewriting to traditional rewriting.
\smallskip

\begin{lemma}\label{lem:strategies}
  Let $S$ be a strategy for $R$. Then, the following inclusion holds:
  \[\parS\quad\subset\quad\transR.\]
\end{lemma}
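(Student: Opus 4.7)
The plan is to simulate the parallel step $u \parS r_S(u)$ by a finite sequence of traditional rewriting steps of $\rewR$ (possibly empty). Writing $u = \sum_{e \in \basis}\lambda_e e$ and $E = \{e \in \supp(u) : e \text{ is a left-hand side of some rule in } S\}$, which is finite, the linearity of $r_S$ gives $r_S(u) = u - \sum_{e \in E}\lambda_e(e - v_e)$, where $e \rewR v_e$ denotes the unique rule of $S$ with left-hand side $e$. The goal is to peel off the summands $\lambda_e(e - v_e)$ one at a time as steps of $\rewR$.

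The crucial choice is the order in which elements of $E$ are contracted: if we reduce the ``largest'' ones first, their right-hand sides may introduce extra copies of smaller elements of $E$ and spoil the targeted coefficient at a later stage. The remedy is to reduce from smallest to largest. Since $\ordS$ is well-founded and transitive, it is a strict partial order, so the finite set $E$ admits a topological enumeration $f_1, \ldots, f_k$ satisfying $f_i <_S f_j \Rightarrow i < j$. Setting $u_0 = u$ and $u_i = u_{i-1} - \lambda_{f_i}f_i + \lambda_{f_i}v_{f_i}$, a telescoping computation yields $u_k = r_S(u)$.

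What remains is to verify that each transition $u_{i-1} \to u_i$ is a genuine rewriting step of $\rewR$, i.e., that the coefficient of $f_i$ in $u_{i-1}$ is still exactly $\lambda_{f_i}$. This is the main obstacle. Since every rule satisfies $e \notin \supp(v_e)$, we have $r_S(f_j) = v_{f_j} \neq f_j$, so by the very definition of $\ordS$, every element of $\supp(v_{f_j})$ lies strictly below $f_j$ in $\ordS$. For $j < i$, the topological condition forbids $f_i <_S f_j$, hence $f_i \notin \supp(v_{f_j})$, and none of the earlier reductions have altered the coefficient of $f_i$. The remaining checks (nonzero coefficient, $f_i$ not in the support of the remainder) are then routine bookkeeping, and concatenating the single-step rewrites gives $u \transR r_S(u)$.
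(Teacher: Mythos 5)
Your proof is correct and follows essentially the same route as the paper's: both contract the reducible elements of $\supp(u)$ one at a time in an order compatible with $\ordS$ (smallest last to be disturbed), relying on the same key fact that the right-hand side of an $S$-rule involves only strictly $\ordS$-smaller basis elements, so earlier contractions never alter the coefficient of a not-yet-contracted one. The paper packages this as a well-founded induction on the multiset extension of $\ordS$, peeling off the maximal elements of the support last, whereas you make the linear extension of $\ordS$ on the finite set $E$ explicit and check the telescoping and coefficient bookkeeping directly.
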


\begin{proof}
  Since $S$ is a strategy, the preorder $\ordS$ of $S$ is a
  well-founded order. This order extends into a well-founded order on $V$
  defined by $u'\ordS u$ if $\supp(u')$ is smaller than $\supp(u)$ for
  the multiset order of $\ordS$. We show the proposition by induction
  \wrt\ $\ordS$. If $u$ is minimal, then $r_S(u)=u$ so that
  $u\transR r_S(u)$. Suppose now that $u$ is not minimal. Then $u$ may be
  uniquely written in the form 
  \begin{equation*}\label{equ:decompo_max}
    u=\sum_{i=1}^n\lambda_ie_i+u'
  \end{equation*}
  where the basis elements $e_i$ are the elements of $\supp(u)$ that are
  maximal for~$\ordS$, and the $\lambda_i$'s are their coefficients in
  $u$. In particular, we have $u'<_S u$ and by induction, we have
  $u'\transR r_S(u')$. By definition of $\ordS$, the rewriting rules that
  are involved in this rewriting sequence have left-hand sides strictly
  smaller than $e_i$'s, so that $u\transR\sum\lambda_ie_i+r_S(u')$.
  Moreover, since the $e_i$'s are not comparable for~$\ordS$, for each
  indices $i$ and $j$, $e_i$ does not belong to $\supp(r_S(e_j))$. Hence,
  we may reduce successively each $e_i$ into $r_S(e_i)$ and finally have
  \[
  u \quad\transR\quad
  \sum \lambda_ie_i+r_S(u')\quad\transR\quad
  \sum\lambda_ir_S(e_i) + r_S(u')
  =r_S(u).\]
\end{proof}
\smallskip

Now, we can show that the $S$-confluence property implies confluence of
$\rewR$.
\medskip

\begin{theorem}\label{thm-S-conf}
  Let $R$ be a set of rewriting rules such that for every
  $e\rewR v\in R$, we have $e\notin\supp(v)$, and let $S$ be a strategy
  for $R$. If $R$ is $S$-confluent, then the rewriting relation $\rewR$
  is confluent.
\end{theorem}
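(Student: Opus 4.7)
The plan is to prove confluence by establishing the Church-Rosser property, recalled in the terminology section as equivalent to confluence: I must show that $u\equivR u'$ implies $u$ and $u'$ are joinable.

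First, I would observe that by Lemma~\ref{lem:strategies}, for any vector $w\in V$ we have $w\parS r_S(w)$, hence $w\transR r_S(w)$. Iterating through the $\parS$-reduction sequence $w\parS r_S(w)\parS r_S^2(w)\parS\cdots$, which stabilises at $\SNF(w)$ in finitely many steps because $\ordS$ is well-founded, I obtain $w\transR \SNF(w)$ for every vector $w$. Thus every vector rewrites via $\rewR$ to its $S$-normal form.

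Next, assume $u\equivR u'$. By Proposition~\ref{prop:vs_structure}, this is equivalent to $u-u'\in\Span{R}$. Since $R$ is $S$-confluent, Theorem~\ref{thm:SNF_decompo} gives $\Span{R}=\ker(\SNF)$, so $\SNF(u-u')=0$. By linearity of $\SNF$ (Proposition~\ref{prop:SNF_projector}) this yields $\SNF(u)=\SNF(u')$. Combining with the first step, both $u$ and $u'$ reduce under $\transR$ to the common vector $\SNF(u)=\SNF(u')$, which establishes joinability and hence the Church-Rosser property, concluding the proof.

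I do not anticipate a serious obstacle here: all the heavy lifting was done in establishing Theorem~\ref{thm:SNF_decompo} (giving $\Span{R}=\ker(\SNF)$) and Lemma~\ref{lem:strategies} (bridging parallel and traditional rewriting). The only subtle point worth double-checking is that the $S$-normal form really is reached in finitely many $r_S$-applications, so that the concatenated $\transR$-path from $u$ to $\SNF(u)$ is of finite length; this follows from well-foundedness of $\ordS$ applied, as in the proof of Lemma~\ref{lem:strategies}, to the multiset extension to $V$, ensuring that the sequence $(r_S^n(u))_n$ is stationary.
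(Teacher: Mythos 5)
Your proposal is correct and follows essentially the same route as the paper's proof: establish the Church-Rosser property by combining Proposition~\ref{prop:vs_structure} (to get $u-u'\in\Span{R}$), Theorem~\ref{thm:SNF_decompo} (to get $\SNF(u)=\SNF(u')$), and Lemma~\ref{lem:strategies} (to get $u\transR\SNF(u)$ and $u'\transR\SNF(u')$). The only difference is that you spell out the iteration of $\parS$-steps and the stabilisation of $(r_S^n(u))_n$ needed to pass from $\parS\subset\transR$ to $u\transR\SNF(u)$, a detail the paper leaves implicit.
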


\begin{proof}
  It is sufficient to show that $\rewR$ has the Church-Rosser property,
  which can be done using our previous results. Let $u,u'\in V$ be two
  vectors such that $u\equivR u'$. From
  Proposition~\ref{prop:vs_structure}, we have $u+\Span{R}=u'+\Span{R}$,
  from Theorem~\ref{thm:SNF_decompo}, we have $\SNF(u)=\SNF(u')$,
  and from Lemma~\ref{lem:strategies} we have $u\transR\SNF(u)$ and
  $u'\transR\SNF(u')$. All together, we get that $u$ and $u'$ rewrite 
  into $\SNF(u)=\SNF(u')$, so that~$\rewR$ has the Church-Rosser
  property.
\end{proof}
\smallskip

Note that $S$-confluence is a sufficient but not a necessary condition for
confluence. Indeed, with~$\basis$ the set of integers and the rewriting rules
$n\rewR n+1$ as in Point~\ref{it:case_N} of
Example~\ref{ex:strategies_step_1}, there is no strategy such that $R$ is
confluent relative to this strategy.
\smallskip

\begin{example}\label{ex:conf}
  In Example~\ref{ex:S-conf}, we have shown that the following set of
  rewriting rules
  \[R=\{e_1\rewR e_2,\quad e_2\rewR e_3+e_4,\quad e_3\rewR e_2-e_4\}\]
  is $S$-confluent for the strategy defined in Point~\ref{it:ex_strat_1}
  of Example~\ref{ex:strategies_step_1}. Hence, $\rewR$ is confluent.
\end{example}
\smallskip

\begin{remark}
  
  Putting together Proposition \ref{prop:vs_structure} and Theorem \ref{thm-S-conf}, a sufficent condition for the confluence for $\rewR$ is the existence, for any rule $\rho = e \rewR v \in R$,
   of a diagram of the following shape:
   \[\begin{tikzcd}[sep = large]
    e\ar[d, "_R"']\ar[r, twoheadrightarrow, "_S"'] &
    r_S(e)  \\
    v \ar[ru, "_{R_{\prec \rho}}"', leftrightarrow, "*"] &.
  \end{tikzcd}\]
The two main features of these results -- using an order on rewriting steps and allowing  zig-zags in confluence diagrams -- are inspired from van Ostroom's work of decreasingness, introduced in \cite{van2008confluence} for abstract rewriting system.
  
\end{remark}

\smallskip

We finish this section by showing how the diamond lemma fits as a
particular case of our setup. For that, we first recall that the
rewriting preorder of $\rewR$ is the preorder on $\basis$ defined as
being the transitive closure of $e'\ordR e$ if there exists a vector
$u\in V$ such that there is a non empty rewriting sequence $e\transR u$
and $e'\in\supp(u)$. The rewriting preorder of $\rewR$ is still written
$\ordR$. The notation~$\ordR$ is the same than the one that we choose for
the rewriting preorder of a strategy since, even if they refer to
different objects, they are analogous: in fact $\ordS$ was defined as
being the adaptation of $\ordR$ to the case of parallel rewriting. Before
proving the diamond lemma using strategies, we need the following
preliminary result.
\smallskip

\begin{lemma}\label{lem:rew_preorder}
  Let $S$ be a prestrategy for $R$. If $e'\ordS e$, then we have
  $e'\ordR e$. In particular, if $\ordR$ is well-founded, then $S$ is a
  strategy for $R$.
\end{lemma}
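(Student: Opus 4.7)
The plan is to reduce the claim to a single implication on the generating relations: since $\ordR$ is itself defined as a transitive closure, it suffices to show that whenever there exists $n \geq 1$ with $e' \in \supp(r_S^n(e))$ and $r_S^n(e) \neq e$, then $e' \ordR e$. The conclusion $\ordS \,\subseteq\, \ordR$ will follow by transitivity of $\ordR$, and the ``in particular'' clause is then immediate, since any sub-relation of a well-founded relation is well-founded.

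I would prove the key statement by induction on $n$. For the base case $n = 1$, the condition $r_S(e) \neq e$ forces $e$ to be a left-hand side of some rule $e \rewR v$ in $S \subseteq R$. Because the ambient setup of Section~\ref{sec:strategies_for_traditional_rewriting_relations} imposes $e \notin \supp(v)$ on every rule of $R$, this rule gives a legitimate, non-empty single-step reduction $e \rewR v$ in the traditional sense. Since $r_S(e) = v$, the hypothesis $e' \in \supp(r_S(e)) = \supp(v)$ translates directly into $e' \ordR e$.

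For the inductive step with $n \geq 2$, I would use the linearity of $r_S$ (guaranteed by Proposition~\ref{prop:SNF_projector}) to write $r_S^n(e) = \sum_{f \in \supp(r_S^{n-1}(e))} \alpha_f \, r_S(f)$, so that any $e' \in \supp(r_S^n(e))$ must lie in $\supp(r_S(f))$ for some $f \in \supp(r_S^{n-1}(e))$. I then split on whether $r_S(f) = f$ or not. If $r_S(f) = f$, then $e' = f$, and I observe that $r_S^{n-1}(e) \neq e$ must hold: otherwise $\supp(r_S^{n-1}(e)) = \{e\}$ would force $f = e$ and then $r_S^n(e) = r_S(e) = e$, contradicting the hypothesis; so the inductive hypothesis applied to $(e, n-1)$ yields $e' = f \ordR e$. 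If $r_S(f) \neq f$, the base case applied to $f$ gives $e' \ordR f$; combining this with $f \ordR e$, which holds either trivially when $f = e$ or by the inductive hypothesis when $r_S^{n-1}(e) \neq e$, delivers $e' \ordR e$ via transitivity of $\ordR$.

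The main nuisance in the proof is the presence of cyclic configurations such as $\{a \rewR b,\ b \rewR a\}$, where $r_S^n$ may return to its starting point at intermediate values of $n$; this means that the shortcut ``feed the induction hypothesis through $r_S^{n-1}(e)$'' is not automatically available, and the case analysis above is exactly what handles it. Once the inclusion $\ordS \,\subseteq\, \ordR$ is established on $\basis$, the second assertion of the lemma follows at once from Definition~\ref{def:strategies}.
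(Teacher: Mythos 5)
Your proof is correct and follows essentially the same route as the paper's: an induction on $n$ showing that the generating relation of $\ordS$ is contained in $\ordR$, followed by transitivity of $\ordR$. If anything, your case split on whether $r_S(f)=f$ tracks the hypothesis $r_S^n(e)\neq e$ more carefully than the paper's own argument (which tacitly assumes the intermediate basis element is a left-hand side of $S$); the only nitpick is that the linearity of $r_S$ comes from its definition as an endomorphism of $V$, not from Proposition~\ref{prop:SNF_projector}.
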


\begin{proof}
  Let $e\in\basis$ be a basis element. We first show by induction that
  for every strictly positive integer~$n\geq 1$ and for every
  $e'\in\supp(r_S^n(e))$, we have $e'\ordR e$. Assuming that that holds
  for every strictly positive integer that is smaller or equal to $n-1$,
  we distinguish two cases. First, if $e'\in\supp(r_S^{n-1}(e))$ and
  $n\geq 2$, then we have $e'<_Re$ by induction hypothesis. In the other
  case ($n=1$ or $e'\notin\supp(r_S^{n-1}(e))$), then there exists
  $e''\in\supp(r_S^{n-1}(e))$, such that $e'\in\supp(r_S(e''))$, so that
  there is a rewriting rule $e''\rewR r_S(e')$ in $R$. That implies that
  $e'<_Re''$ and thus $e'<_Re$; the last assertion is due to the facts
  that if $n=1$, then $e''=e$ and if not, we have $e''<_Re$ by induction
  hypothesis. Since $\ordS$ is the transitive closure of there exists a
  strictly positive integer~$n$ such that $e'\in\supp(r_S^n(e))$ and
  since $\ordR$ is transitive, the first assertion is shown. This
  assertion implies that if $\ordR$ is well-founded, then $\ordS$ is also
  well-founded, hence the second assertion.
\end{proof}
\smallskip

\begin{theorem}[Diamond's Lemma \cite{MR506890}]
  \label{thm:diamond_lemma}
  Let $R$ be a set of rewriting rules such that the rewriting preorder
  $\ordR$ of~$\rewR$ is well-founded. Assume that  for every $e\in\basis$
  such that $e\rewR v$ and $e\rewR v'$, $v$ and $v'$ are joinable. 
  Then,~$\rewR$ is confluent.
\end{theorem}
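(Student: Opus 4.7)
The plan is to construct a strategy $S$ from $R$, verify that $R$ is $S$-confluent via the decreasingness criterion of Proposition~\ref{prop:decreasing}, and then invoke Theorem~\ref{thm-S-conf} to obtain confluence of $\rewR$.

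First, using the axiom of choice, select for each basis element $e\in\basis$ that is the left-hand side of at least one rule of $R$ a single such rule $e\rewR v_e$, and let $S$ be the set of rules thus chosen. By construction $S$ is a prestrategy for $R$. Since the rewriting preorder $\ordR$ is well-founded by hypothesis, Lemma~\ref{lem:rew_preorder} ensures that $\ordS$ is well-founded as well, so that $S$ is in fact a strategy for $R$.

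Next, I would define a well-founded order $\prec$ on $R$ by declaring $\rho'\prec\rho$ whenever the left-hand side of $\rho'$ is strictly smaller, for $\ordR$, than that of $\rho$; this order inherits well-foundedness from $\ordR$. To check that $R$ is decreasing with respect to $(S,\prec)$, fix a rule $\rho=e\rewR v$ and set $v_S=r_S(e)$, so that $e\rewR v_S$ is the unique rule of $S$ with left-hand side $e$. By the local confluence hypothesis of the theorem, $v$ and $v_S$ are joinable for $\rewR$: there exists $w\in V$ with $v\transR w$ and $v_S\transR w$. The crucial claim is that every rule applied along these two rewriting sequences has left-hand side strictly below $e$ in $\ordR$. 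From the definition of $\ordR$ and the very existence of the rules $e\rewR v$ and $e\rewR v_S$, all elements of $\supp(v)$ and $\supp(v_S)$ satisfy $e'<_R e$; an induction on the length of each sequence, combined with the transitivity of $\ordR$, then shows that every intermediate vector has support contained in $\{e'\in\basis:\ e'<_R e\}$, so that each rule used belongs to $\Rrho$. Writing $v-w$ and $v_S-w$ as telescoping sums of elementary rewriting differences $\lambda(e''-v'')$ with $e''\rewR v''\in\Rrho$, both lie in $\Span{\Rrho}$, and therefore
\[
v-r_S(e)\;=\;(v-w)-(v_S-w)\;\in\;\Span{\Rrho},
\]
which is exactly the decreasingness condition.

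Applying Proposition~\ref{prop:decreasing} then gives $S$-confluence of $R$, and Theorem~\ref{thm-S-conf} yields confluence of $\rewR$. The main obstacle I anticipate is the inductive stability argument showing that supports along the joining paths remain strictly below $e$ in $\ordR$; once this is in hand, everything else is a direct composition of earlier results.
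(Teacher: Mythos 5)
Your proof is correct and follows essentially the same route as the paper's: choose one rule per reducible basis element to form the strategy $S$, invoke Lemma~\ref{lem:rew_preorder} to get well-foundedness of $\ordS$, order the rules of $R$ by $\ordR$ on their left-hand sides, use the joinability hypothesis to establish decreasingness, and conclude via Proposition~\ref{prop:decreasing} and Theorem~\ref{thm-S-conf}. Your inductive argument that the supports along the joining paths stay strictly below $e$ for $\ordR$ is a correct and slightly more explicit justification of the step the paper dispatches with ``by definitions of the orders $\ordR$ and $\prec$, we have $e_i\ordR e$.''
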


\begin{proof}
  First, notice that since $\ordR$ is terminating, all the rules
  $e\rewR v\in R$ are such that $e\notin\supp(v)$. From
  Theorem~\ref{thm-S-conf}, we only have to prove that there exists a
  strategy $S$ for $R$ such that $R$ is $S$-confluent. We define $S$ as
  follows: for every basis element $e$ that is reducible by $\rewR$, we
  select exactly one arbitrary rewriting rule with left hand-side $e$ and
  we define $S$ as being the prestrategy composed of these selected
  rewriting rules. From Lemma~\ref{lem:rew_preorder}, $S$ is a strategy
  for $R$.
  \smallskip

  From Proposition~\ref{prop:decreasing}, in order to show that $R$ is
  $S$-confluent, we only have to construct a well-founded order $\prec$
  on $R$ such that $R$ is decreasing \wrt\ $(S,\prec)$. Given two
  rewriting rules $\rho=e\rewR v$ and $\rho'=e'\rewR v'$, we let
  $\rho\prec\rho'$ if and only if $e\ordR e'$. Let us consider an
  arbitrary rule $\rho=e\rewR v$. If $\rho$ belongs to $S$, then we have
  $v=r_S(e)$, hence $v-r_S(e)=0$ belongs to~$\Span{\Rrho}$. Otherwise,
  since $e$ is not a normal form for $R$, there exists a rule
  $e\rewR r_S(e)$ in $S$. Using the joinability hypothesis, we get that
  $v$ and $r_S(e)$ are joinable, so that we have a decomposition
  $v-r_S(e)=\sum\lambda_i(e_i-v_i)$, where $\rho_i=e_i\rewR v_i$ are part
  of the rules used to join $v$ to $r_S(e)$. By definitions of the orders
  $\ordR$ and $\prec$, we have $e_i\ordR e$, so that $\rho_i\prec\rho$.
  Hence, we have $v-r_S(e)\in\Span{\Rrho}$, which proves that $R$ is
  decreasing \wrt\ $(S,\prec)$. Hence, $R$ is $S$-confluent, so that
  $\rewR$ is confluent.
\end{proof}

\section{Rewriting strategies over rational Weyl algebras}
\label{sec:rewriting_strategies_over_rational_Weyl_algebras}

In this section, we investigate rewriting systems over rational Weyl
algebras and relate involutive divisions to rewriting strategies for such
systems. In particular, we show that involutive sets in rational Weyl
algebras induce confluent rewriting systems.
\medskip

Throughout the section, we fix a set $X=\{x_1,\cdots,x_n\}$ of
indeterminates and the field of fractions of the polynomial algebra $\QX$
over~$\Q$ is denoted by $\Q(X)=\QXX$, it is the set of rational
functions in the indeterminates $X$. We fix another set of
variables $\Delta=\{\diff{1},\cdots,\diff{n}\}$ that model partial
derivative operators, see Example~\ref{ex:diff_operators_init}. We denote
by $\partial^{\alpha}=\diff{1}^{\alpha_1}\cdots\diff{n}^{\alpha_n}$ the
monomial over $\Delta$ with multi-exponent
$\alpha=(\alpha_1,\cdots,\alpha_n)\in\N^n$. Finally, let $\monBasis$ be
the set of monomials over~$\Delta$:
\[\monBasis=\left\{\partial^\alpha:\ \alpha\in\N^n\right\}.
\smallskip\]
In what follows, we keep the terminology monomials only for elements of
$\Mon(\Delta)$ and not for elements in $\Mon(X)$.

\subsection{Rewriting systems over rational Weyl algebras}
\label{sec:rewriting_systems_over_Weyl_algebras}

In this section, we recall the definition of the rational Weyl algebra
and introduce rewriting systems induced by monic operators.
\medskip

\begin{definition}
  The {\it rational Weyl algebra} over $\Q(X)$ is the set of polynomials
  $\Q(X)[\Delta]$ with coefficients in $\Q(X)$ and indeterminates
  $\Delta$. The multiplication of this $\mathbb Q$-algebra is induced by
  the commutation laws $\partial_i\partial_j=\partial_j\partial_i$ and
  \[\diff{i}f=f\diff{i}+\frac{d}{dx_i}(f),\quad f\in\Q(X),\quad
  1\leq i\leq n,
  \smallskip\]
  where $d/dx_i:\Q(X)\to\Q(X)$ is the partial derivative operator with
  respect to~$x_i$. This algebra is denoted by $\Weyl{n}$.
\end{definition}
\smallskip

Notice that $\Weyl{n}$ is a $\Q(X)$-vector space and that the monomial 
set $\monBasis$ is a basis of~$\Weyl{n}$. Elements of~$\Weyl{n}$ should 
be thought of as differential operators whose coefficients are rational
functions, and for this reason, a generic element of this algebra is
denoted by $\D$ and is called a differential operator. In the following
example, we illustrate how these operators provide an algebraic model of
linear systems of ordinary differential (in the case $n=1$) and partial
derivative equations (in the case~$n\geq 2$) with one unknown function. 
\smallskip

\begin{example}\label{ex:diff_operators_init}
  {\color{white}toto}
  \begin{enumerate}
  \item\label{it:ODE_init} The linear ordinary differential equation
    $y'(x)=xy(x)$ is written in the form $(\D y)(x)=0$, where the operator
    $\D=\partial-x$ belongs to $\Weyl{1}=\Q(x)[\partial]$. 
  \item\label{it:Janet_example_init} Consider Janet's
    example~\cite{MR1308976}, that is, the linear system of partial
    derivative equations with~$3$ variables, one unknown function, and
    the two equations $y_{33}(x)=x_2y_{11}(x)$ and $y_{22}(x)=0$, where
    $y_{ij}(x)$ denotes the second order derivative of the unknown
    function $y(x)$ with respect to the variables $x_i$ and~$x_j$. Then,
    these equations are written $(\D_1y)(x)=0$ and $(\D_2y)(x)=0$, where
    $\D_1,\D_2\in\Weyl{3}$ are defined as follows:
    \[\D_1=\partial_3^2-x_2\partial_1^2,\quad \D_2=\partial_2^2.
    \smallskip\]
  \end{enumerate}
\end{example}

\begin{remark}
  In~\ref{it:ODE_init} of Example~\ref{ex:diff_operators_init}, we
  implicitly used that every $f\in\Q(X)$ induces a unique multiplication
  operator $y(x)\mapsto f(x)y(x)$.
\end{remark}
\smallskip

The next step before introducing rewriting systems over rational Weyl
algebras is to recall the definition of monic operators. We fix a
monomial order $\prec$ on $\monBasis$, that is, a well-founded total 
order which is admissible, \ie, $\partial^{\alpha}\prec\partial^{\beta}$
implies $\partial^{\alpha+\gamma}\prec\partial^{\beta+\gamma}$, for every
$\alpha,\beta,\gamma\in\N^n$. Given an operator $\D$, we denote by
$\lm(\D)$ the leading monomial of $\D$ with respect to~$\prec$, that 
is, $\lm(\D)$ is the greatest element of $\supp(\D)$, where the support 
is defined \wrt\ the basis $\monBasis$. 
\smallskip

\begin{definition}
  Let $\prec$ be a monomial order on $\monBasis$. A differential
  operator $\D\in\Weyl{n}$ is said to be $\prec$-{\em monic} if the
  coefficient of $\lm(\D)$ on $\D$ is equal to $1$. Moreover, given a
  monic differential operator $\D$, we denote by $r(\D)=\lm(\D)-\D$.
\end{definition}
\smallskip

Since the monomial order $\prec$ is fixed, me simply say monic instead of
$\prec$-monic. Given a set of monic operators~$\Theta\subseteq\Weyl{n}$,
let us consider the rewriting relation on $\Weyl{n}$ induced by the
following rewriting rules: 
\begin{equation}\label{equ:rewTheta}
  \RTheta=\Big\{\partial^\alpha\lm(\D)\to_{\RTheta}\partial^\alpha
  r(\D):\ \D\in\Theta,\ \partial^\alpha\in\Mon(\Delta)\Big\}.
\end{equation}
For simplicity, we write $\D\rewTheta\D'$ instead of
$\D\to_{R_\Theta}\D'$. The rewriting relation $\rewTheta$ is terminating
since the rewriting rules reduce a monomial into a combination of
strictly smaller monomials \wrt\ the well-founded order $\prec$.
Moreover, notice that in the case where the coefficient $\lc(\D)\in\Q(X)$
of~$\lm(\D)$ in $\D$ is not constant, the situation is much harder.
Indeed, in this case, the left-hand sides of the rewriting rules are of
the form $\partial^\alpha(\lc(\D)\lm(\D))$ and due to commutation laws,
these elements are not monomials. In particular, we are not in the
situation of our general approach developed in  
Section~\ref{sec:rewriting_strategies_over_vector_space} anymore.
\medskip

We finish this section with some comments on $\rewTheta$. Let us consider 
the linear system of ordinary differential or partial derivative
equations with unknown function $y$ given by 
\begin{equation}\label{equ:PDE_system}
  \{(\D y)=0:\D\in\Theta\}.
\end{equation}
Let $y(x)$ be an arbitrary solution to this system. Then, for every
operator $\partial^\alpha$ and every $\D\in\Theta$, we also have
$(\partial^\alpha\D y)(x)=0$, or equivalently,
$(\partial^\alpha\lm(\D)y)(x)=(\partial^\alpha r(\D)y)(x)$. Hence, if
there is a rewriting sequence $\D_1\transTheta\D_2$, then the solution
$y(x)$ of~\eqref{equ:PDE_system} satisfies $(\D_1y)(x)=(\D_2y)(x)$. This
remark has deep applications in the formal theory of partial differential
equations, for instance for finding integrability conditions or computing
dimensions of solution spaces, see~\cite{MR1308976}. Moreover, notice
that since $\monBasis$ is a commutative set, there is another possible
choice for rewriting the monomial $\partial^\alpha\lm(\D)$
in~\eqref{equ:rewTheta}. Indeed, we could swap $\partial^\alpha$ and
$\lm(\D)$ to get the new rule
$\lm(\D)\partial^\alpha\rewTheta r(\D)\partial^\alpha$. This rule is
simpler in the sense that it does not require to apply any commutation 
law to its right-hand side in contrast with~\eqref{equ:rewTheta}.
However, we do not take this rule into account since it would break the
algebraic model of partial derivative equations. Indeed, if $y(x)$ is a
solution of~\eqref{equ:PDE_system}, then the relation
$(\lm(\D_i)\partial^\alpha y)(x)=(r(\D_i)\partial^\alpha y)(x)$ does not
hold in general, as illustrated in~\ref{it:ODE_rew} of the following
example.
\smallskip

\begin{example}\label{ex:diff_operators_rew}
  We continue Example~\ref{ex:diff_operators_init}.
  \begin{enumerate}
  \item\label{it:ODE_rew} Let $\Theta=\{\D\}$ where
    $\D=\partial-x\in\Weyl{1}$. Since $\partial$ is greater than $1$ for
    every monomial order,~$\rewTheta$ is induced by the rewriting rules
    $\partial^n\rewTheta \partial^{n-1}x$, where $n$ is a strictly
    positive integer. In particular, we have the following rewriting
    sequence:
    \[\partial^2\rewTheta\partial x=x\partial+1\rewTheta x^2+1.
    \smallskip\]
    In terms of the corresponding differential equation $y'(x)=xy(x)$,
    this rewriting sequence has the following meaning. First, notice that
    the space of solutions of this equation is the one-dimensional
    $\R$-vector space spanned by the function $e^{x^2/2}$. Moreover, the
    second order derivative of a solution $y(x)=Ce^{x^2/2}$, for an
    arbitrary constant $C$, is given by the formula
    $y''(x)=(x^2+1)Ce^{x^2/2}$, 
    which reads $(\partial^2y)(x)=(x^2+1)y(x)$ in terms of operators.
    Notice that if we allow to reduce the left~$\partial$ 
    in~$\partial^2$, then we get $\partial^2\transTheta x^2$, which is 
    false in terms of operators since $y''(x)$ is not equal to~$x^2y(x)$.
  \item\label{it:Janet_example_rew} Let $\Theta=\{\D_1,\D_2\}$, where
    $\D_1=\partial_3^2-x_2\partial_1^2$ and $\D_2=\partial_2^2$
    correspond to the two equations of the Janet example. We define
    $\prec$ as being the deg-lex order on
    $\Mon(\partial_1,\partial_2,\partial_3)$ induced by
    $\partial_1\prec\partial_2\prec\partial_3$, so that $\rewTheta$ is
    induced by the rewriting rules
    $\partial_3^2\rewTheta x_2\partial_1^2$ and
    $\partial_2^2\rewTheta 0$. Then,~$\rewTheta$ is not confluent since:
    \begin{equation}\label{equ:non_conf_Janet_ex}
      \begin{tikzcd}
        \partial_2^2\partial_3^2\ar[d, "_\Theta"']\ar[r, "_\Theta"'] &
        \partial_2^2(x_2\partial_1^2)\ar[d, "_\Theta"]\\
        0 & 2\partial_1^2\partial_2
      \end{tikzcd}
    \end{equation}
    The right arrow is an application of the rule
    $\partial_2^2 \rewTheta 0$, made possible since
    $\partial_2^2(x_2\partial_1^2)$ is equal to
    $\partial_1^2\partial_2+x_2\partial_1^2\partial_2^2$ (to see this, it
    suffices to apply twice the commutation law
    $\partial_2x_2=x_2\partial_2+1$).  We deduce
    from~\eqref{equ:non_conf_Janet_ex} that any solution $y(x)$ of the
    equations $(\D_iy)(x)=0$ has to verify the new integrability
    condition $y_{112}(x)=0$.
  \end{enumerate}
\end{example}
\smallskip

\subsection{Involutive divisions and strategies}
\label{sec:involutive_divisions_and_strategies}

In this section, we interpret involutive divisions in terms of strategies
for the rewriting relation induced by a set of monic differential
operators. From this, we show that the rewriting system induced by an
involutive set of operators is confluent.
\medskip

We first recall from~\cite{MR1627129} the definition of involutive
divisions and associated notions that are involutive divisors,
multiplicative variables, and autoreducibility. For that, we temporally
work with monomials instead of operators and denote these monomials with
Latin letters $u,m$ instead of~$\partial^\alpha$. Then, we will reuse the
operator notation for monomials when we will consider rewriting systems
over rational Weyl algebras. An {\em involutive division} $L$ on
$\Mon(\Delta)$ is defined by a binary relation~$\divInv{L}^U$ on
$U\times\Mon(\Delta)$, for every finite subset $U\subset\Mon(\Delta)$,
satisfying for every $u,u'\in U$ and every $m,m'\in\Mon(\Delta)$, the
following relations:
\begin{enumerate}[label=\alph*)]
\item\label{it:div} $u\divInv{L}^Um\Rightarrow u\mid m$,
\item\label{it:unit} $u\divInv{L}^Uu$,
\item\label{it:mul} $u\divInv{L}^Uum$ and $u\divInv{L}^Uum'$ if and only
  if $u\divInv{L}^Uumm'$,
\item\label{it:vertex} $u\divInv{L}^Um$ and $u'\divInv{L}^Um$ implies
  $u\divInv{L}^Uu'$ or $u'\divInv{L}^Uu$,
\item\label{it:transitivity} $u\divInv{L}^Uu'$ and $u'\divInv{L}^Um$
  implies $u\divInv{L}^Um$,
\item\label{it:filter} for every $V\subseteq U$ and every $v\in V$,
  $v\divInv{L}^Um$ implies $v\divInv{L}^Vm$. 
\end{enumerate}
In the sequel, we write $\divInv{L}$ instead if $\divInv{L}^U$ when the
context is clear. We say that $u\in U$ is an {\em L-involutive divisor}
of a monomial $m$ if $u\divInv{L}m$. The variable~$\partial_i$ is said to be
{\em L-multiplicative} for $u$ \wrt\ $U$ if $u$ is an $L$-involutive
divisor of $\partial_iu$. Notice that $u\divInv{L}m$ if and only if
$m=m'u$, where $m'$ contains only $L$-multiplicative variables for $u$
\wrt\ $U$. Notice also that an involutive division is entirely
determined by the list of multiplicative variables \wrt\ each finite
set $U$ such that conditions \ref{it:vertex}, \ref{it:transitivity}, and
\ref{it:filter} are fulfilled. We say that $U$ is {\em L-autoreduced} if
every $u\in U$ admits only $u$ as $L$-involutive divisor, \ie,
$u'\divInv{L}u$ implies $u'=u$. Notice that if $U$ is $L$-autoreduced,
then every monomial $m$ admits at most one $L$-involutive divisor. We
finish this discussion on involutive divisions with three classical
examples. Before, let us introduce the following notation: given a
monomial $m=\partial^\alpha\in\monBasis$, let us denote by
$d_k(m)=\alpha_k$ the degree of $m$ \wrt the variable $\partial_k$.
\smallskip

\begin{example}\label{ex:involutive_division}

  We fix a finite set of monomials $U\subset\monBasis$. The
  {\em Janet, Thomas} and {\em Pommaret} divisions are the involutive
  divisions $\divInv{J},\divInv{T}$, and $\divInv{P}$ such that the
  variable $\partial_i$, where $1\leq i\leq n$, is
  $J,L$ or $P$-multiplicative for $u$ \wrt\ $U$ if and only if 
  \begin{itemize}
  \item for $\divInv{J}$: $d_i(u)=\max\{d_i(u'):\ u'\in U\ \text{and}\
    d_j(u')=d_j(u),\ \forall i<j\leq n\}$, 
  \item for $\divInv{T}$: $d_i(u)=\max\{d_i(u'):\ u'\in U\}$,
  \item for $\divInv{P}$: for every $1\leq j\leq i$, we have $d_j(u)=0$.
  \end{itemize}

\end{example}
\smallskip

Now, we return to differential operators and we fix a monomial order
$\prec$ on $\monBasis$. Given a finite set $\Theta\subset\Weyl{n}$ of
$\prec$-monic differential operators, all the theory of monomial sets can
be applied to the case where $U$ is the set of leading monomials of
elements of $\Theta$:
\[\lm(\Theta)=\left\{\lm(\D):\ \D\in\Theta\right\}\subset\monBasis
\smallskip\]
Hence, we may extend the autoreducibility property for monomial sets
\wrt\ an involutive division to sets of differential operators.
\smallskip

\begin{definition}
  Let $\Theta\subset\Weyl{n}$ be a finite set of $\prec$-monic
  differential operators, let $\prec$ be a monomial order, and let $L$ be
  an involutive division on $\Mon(\Delta)$. We say that $\Theta$ is
  {\em left L-autoreduced} if $\lm(\Theta)$ is $L$-autoreduced.
\end{definition}
\smallskip
\noindent
The adjective "left" is here to emphasis that it may exist
$\D,\D'\in\Theta$ such that $\lm(\D)$ is an $L$-involutive divisor of a
monomial $\partial^\alpha\in\supp(r(\D'))$.
\medskip

\begin{example}\label{ex:multiplicative_variables}
  We can now apply the involutive divisions of 
  Example~\ref{ex:involutive_division} to find the multiplicative
  variables associated to the differential operators of
  Example~\ref{ex:diff_operators_rew}.
  \begin{enumerate}
  \item Take $\Theta = \{\D\}$, where $\D = \partial - x \in \Weyl 1$.
    Then, $\lm(\D) =\partial$, and $\partial$ is a multiplicative
    variable for $\D$ for the Janet and Thomas divisions, but not for the
    Pommaret one. This means that $\partial \divInv{J}^\Theta \partial^n$
    and $\partial \divInv{T}^\Theta \partial^n$ for all $n > 0$, but that
    $\partial \nmid_P^\Theta \partial^n$, unless $n = 1$. In addition,
    since $\Theta$ is a singleton, it is trivially left-autoreduced for
    all three involutive divisions. 
  \item Take now $\Theta = \{\D_1,\D_2\}$, where
    $\D_1=\partial_3^2 - x_2\partial_1^2$ and $\D_2 = \partial_2^2$. The
    following table gives the multiplicative variables for $\D_1$ and
    $\D_2$ \wrt\ $\Theta$ for all three involutive divisions:
    \begin{center}
    \begin{tabular}{l|ccc}
      & Janet & Thomas & Pommaret \\ \hline
      $\D_1$ & $\partial_1, \partial_2, \partial_3$ & $\partial_1, \partial_3$ & $\emptyset$ \\
      $\D_2$ & $\partial_1, \partial_2$ & $\partial_2$ & $\partial_1$ \\
    \end{tabular}
  \end{center}
    Once again, the leading monomials of elements of $\Theta$ do not divide
    each others, so $\Theta$ is left-autoreduced for all three involutive
    divisions.
  \end{enumerate}
\end{example}
\smallskip

From now on, we fix a set $\Theta$ of monic (the order being fixed, we
drop it in $\prec$-monic) differential operators. Let $\RTheta$  be the
set of rewriting rules of the form
$\partial^\alpha\lm(\D)\rewTheta\partial^\alpha r(\D)$, such as
in~\eqref{equ:rewTheta}. Since~$\lm(\Theta)$ is the only monomial set we
will work with, we omit it in the symbol of the involutive division: we
write $\lm(\D)\divInv{L}\partial^\alpha\lm(\D)$ when $\partial^\alpha$
contains only $L$-multiplicative variables for $\lm(\D)$ \wrt\
$\lm(\Theta)$. Finally, we let
\begin{equation}\label{equ:S-strategy}
  \SThetaL=\Big\{\partial^\alpha\lm(\D)\parTheta{L}\partial^\alpha
  r(\D) : \, \D\in\Theta,\quad\lm(\D)\divInv{L}\partial^\alpha\lm(\D)
  \Big\}.
  \smallskip
\end{equation}
Here again, we choose to write
$\partial^\alpha\lm(\D)\parTheta{L}\partial^\alpha r(\D)$ instead of
$\partial^\alpha\lm(\D)\twoheadrightarrow_{\SThetaL}\partial^\alpha
r(\D_i)$ in order to simplify notations.
\smallskip

\begin{proposition}\label{prop:involutive_strategy}
  Let L be an involutive division on $\Mon(\Delta)$ such that $\Theta$ is
  left L-autoreduced. Then $\SThetaL$ is a strategy for $\RTheta$.  
\end{proposition}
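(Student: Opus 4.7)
The plan is to verify the two conditions of Definition~\ref{def:strategies}: that $\SThetaL$ is a prestrategy, i.e.\ its rules have pairwise distinct left-hand sides, and that the associated preorder $<_{\SThetaL}$ is well-founded. For the first condition, I would argue as follows. Suppose two rules of $\SThetaL$ share a left-hand side, so that $\partial^{\alpha}\lm(\D)=\partial^{\beta}\lm(\D')$ with $\lm(\D)\divInv{L}\partial^{\alpha}\lm(\D)$ and $\lm(\D')\divInv{L}\partial^{\beta}\lm(\D')$. Both $\lm(\D)$ and $\lm(\D')$ are then $L$-involutive divisors of the same monomial, so axiom~(d) of involutive divisions forces $\lm(\D)\divInv{L}\lm(\D')$ or the reverse divisibility. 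Left $L$-autoreducibility of $\Theta$ then yields $\lm(\D)=\lm(\D')$, whence $\alpha=\beta$ and, under the tacit convention that distinct operators of $\Theta$ carry distinct leading monomials, $\D=\D'$, so the two rules coincide.

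The second condition reduces to a support estimate: for every rule $\partial^{\alpha}\lm(\D)\parTheta{L}\partial^{\alpha}r(\D)$ of $\SThetaL$, every monomial in $\supp(\partial^{\alpha}r(\D))$ is $\prec$-strictly smaller than $\partial^{\alpha}\lm(\D)$. Indeed, $\supp(r(\D))\subseteq\{\partial^{\gamma}:\partial^{\gamma}\prec\lm(\D)\}$ by the very definition of $r(\D)$, and rewriting each product $\partial^{\alpha}c\,\partial^{\gamma}$ back into the standard basis of $\Weyl{n}$ by iterated application of the commutation rule $\partial_{i}f=f\partial_{i}+\frac{d}{dx_{i}}(f)$ only produces terms $c'\partial^{\mu}$ with $\mu\leq\alpha+\gamma$ componentwise; admissibility of $\prec$ then gives $\partial^{\mu}\preceq\partial^{\alpha+\gamma}\prec\partial^{\alpha}\lm(\D)$.

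With this estimate, a straightforward induction on $n$ shows that for every basis monomial $e$ and every $n\geq 1$, either $r_{S}^{n}(e)=e$ (when $e$ is an $\SThetaL$-normal form, so already $r_{S}(e)=e$) or every $e'\in\supp(r_{S}^{n}(e))$ satisfies $e'\prec e$. Consequently, whenever $e'$ is related to $e$ by the generating binary relation of $<_{\SThetaL}$, one has $e'\prec e$; by transitivity of $\prec$, the same inclusion persists after taking the transitive closure, and well-foundedness of the monomial order transfers to $<_{\SThetaL}$. The only non-routine step is the support estimate above, which ultimately rests on the observation that each commutation $\partial_{i}f=f\partial_{i}+\frac{d}{dx_{i}}(f)$ never raises the order of the differential monomial involved.
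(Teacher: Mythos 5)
Your proof is correct and follows essentially the same route as the paper: distinctness of left-hand sides comes from the uniqueness of $L$-involutive divisors of a monomial under left $L$-autoreducedness, and well-foundedness of the preorder comes from embedding it into the monomial order $\prec$. The only cosmetic difference is that the paper obtains the latter by citing Lemma~\ref{lem:rew_preorder} together with the well-foundedness of the rewriting preorder of $\rewTheta$, whereas you verify directly the support estimate that right-hand sides expand, after applying the commutation laws, into strictly $\prec$-smaller monomials.
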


\begin{proof}
  If the set $\Theta$ is left $L$-autoreduced, then every monomial admits
  at most one $L$-involutive divisor. Moreover, every left-hand side
  $\partial^\alpha\lm(\D)$ of a rewriting rule of $\SThetaL$ is
  $L$-divisible by~$\lm(\D)$. Hence, left-hand sides of $\SThetaL$ are
  pairwise distinct, which means that $\SThetaL$ is a pre-strategy for
  $\RTheta$. Finally, if $<_\Theta$ denotes the rewriting preorder of
  $\rewTheta$, then $\partial^\alpha<_\Theta\partial^\beta$ implies that
  $\partial^\alpha\prec\partial^\beta$, so that $<_\Theta$ is
  well-founded. From Lemma~\ref{lem:rew_preorder}, we get that $\SThetaL$
  is a strategy for~$\RTheta$. 
\end{proof}
\smallskip

From Proposition~\ref{prop:involutive_strategy}, any involutive division
$L$ such that $\Theta$ is left $L$-autoreduced induces a 
strategy~$\SThetaL$ for $\RTheta$. Hence, we get a well-defined
normalisation operator $\SThetaLNF$ corresponding to this strategy. The
following definition is an adaptation of the notion of involutive bases
for polynomial ideals~\cite{MR1627129} to the case of sets of monic
differential operators.
\smallskip

\begin{definition}
  Let $\Theta\subset\Weyl{n}$ be a finite set of differential operators,
  let $\prec$ be a monomial order on $\monBasis$ such that each element
  of $\Theta$ is monic, and let $L$ be an involutive division on
  $\Mon(\Delta)$ such that $\Theta$ is left $L$-autoreduced. We say that
  $\Theta$ is an {\em $L$-involutive set} if for every $\D\in\Theta$ and every
  $\partial^\alpha\in\Mon(\Delta)$, we have
  $\SThetaLNF(\partial^\alpha\D)=0$. 
\end{definition}
\smallskip

\begin{example}
Let us continue Example~\ref{ex:multiplicative_variables}.
\begin{enumerate}
\item In the case $\Theta = \{ \D \}$, with $\D = \partial - x$. For the
  Pommaret division, we have seen that $\D$ admits no multiplicative
  variable, so the strategy $S_{\Theta,P}$ is reduced to the rule
  $\partial\parTheta{P} x$. As a result we get:
  \[\partial \D = \partial^2 - \partial x = \partial^2 - x \partial - 1
  \parTheta{P} \partial^2 - x^2 - 1.
  \]
  This last term is a normal form for $\parTheta{P}$, hence
  $\SThetaNF{P}(\partial\D) \neq 0$ and so $\Theta$ is not
  $P$-involutive. On the other hand for the Janet and Thomas divisions,
  $S_{\Theta,J}$ and $S_{\Theta,T}$ coincide, and contain the rules
  $\partial^{n+1}\parTheta{L}\partial^n x$, where $L=J,T$. This yields:
  \[
  \partial\D=\partial^2-\partial x=\partial^2-x\partial-1\parTheta{L}
  \partial x-x^2-1=x\partial -x^2 \parTheta{L}0.
  \]
  So we get $\SThetaNF{L}(\partial\D) = 0$, and more generally
  $\SThetaNF{L}(\partial^n\D) = 0$: $\Theta$ is both  $J$- and
  $T$-involutive.  
\item In the case $\Theta = \{ \D_1 , \D_2 \}$, with
  $\D_1 = \partial_3^2 - x_2 \partial_1^2$ and $\D_2 = \partial_2^2$,
  $\Theta$ will not be involutive for either of the three involutive
  divisions of Example~\ref{ex:involutive_division}. In the case of the
  Janet division for example, we have:
  \[
    \partial_3^2 \D_2 = \partial_2^2 \partial_3^2
    \parTheta{J} \partial_2^2 (x_2 \partial_1^2) =
    x_2 \partial_1^2 \partial_2^2 - 2 \partial_1^2 \partial_2
    \parTheta{J} 2 \partial_1^2 \partial_2.
  \]
  This last term is a normal form for $S_{\Theta,J}$, so we get
  $\SThetaNF{J}(\partial_3^2 \D_2) = 2 \partial_1^2 \partial_2
  \neq 0$: $\Theta$ is not $J$-involutive. 
\end{enumerate}
\end{example}

The astute reader may remark that the last computation of the previous
example is closely related to the diagram appearing in 
Example~\ref{ex:diff_operators_rew}, which shows that $\rewTheta$ fails
to be confluent. This relationship between confluence and
$L$-involutivity is actually a very general one, as shown by the
following theorem.
\medskip

\begin{theorem}\label{thm:involutive_conf}
  Let $\Theta\subset\Weyl{n}$ be a finite set of differential operators,
  let $\prec$ be a monomial order on~$\monBasis$ such that each element
  of $\Theta$ is monic, and let L be an involutive division 
  on~$\monBasis$ such that $\Theta$ is left L-autoreduced. If $\Theta$ is
  L-involutive, then the rewriting relation~$\rewTheta$ is confluent.
\end{theorem}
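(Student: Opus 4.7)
The plan is to reduce the claim to Theorem~\ref{thm-S-conf} applied to the rewriting system $R = \RTheta$ together with the strategy $S = \SThetaL$, and then to observe that the $\SThetaL$-confluence condition coincides almost tautologically with the $L$-involutivity hypothesis. Three conditions must be verified: that $\SThetaL$ is a strategy for $\RTheta$, that $e \notin \supp(v)$ for every rule $e \rewTheta v$ of $\RTheta$, and that $\RTheta$ is $\SThetaL$-confluent. The strategy condition holds by Proposition~\ref{prop:involutive_strategy}, since $\Theta$ is left $L$-autoreduced by hypothesis.

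For the $\SThetaL$-confluence condition, the identity $r(\D) = \lm(\D) - \D$ gives
\[
\partial^\alpha \lm(\D) - \partial^\alpha r(\D) \ = \ \partial^\alpha \D
\]
for each rule $\partial^\alpha \lm(\D) \rewTheta \partial^\alpha r(\D)$ of $\RTheta$. Applying $\SThetaLNF$ to both sides and using the $L$-involutivity assumption then yields $\SThetaLNF(\partial^\alpha \lm(\D) - \partial^\alpha r(\D)) = \SThetaLNF(\partial^\alpha \D) = 0$, which is exactly the condition of Definition~\ref{def:standardisation_property} for every rule of $\RTheta$. So the $S$-confluence property of the general framework is built into the very definition of an $L$-involutive set.

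The only step that requires a small computation, and where I expect whatever mild obstacle there is to lie, is the support condition $\partial^\alpha \lm(\D) \notin \supp(\partial^\alpha r(\D))$; this is implicit in the termination remark of Section~\ref{sec:rewriting_systems_over_Weyl_algebras}, but it is worth spelling out. Since $\D$ is monic, $r(\D)$ is a $\Q(X)$-linear combination of monomials $\partial^\beta$ with $\partial^\beta \prec \lm(\D)$. Expanding each product $\partial^\alpha f \partial^\beta$ in the basis $\monBasis$ via the Weyl algebra commutation laws produces monomials of the form $\partial^{\gamma+\beta}$ with $\gamma \leq \alpha$ componentwise. The admissibility of $\prec$, together with the inequality $1 \preceq \partial^\delta$ valid for every well-founded admissible order, then yields $\partial^{\gamma+\beta} \preceq \partial^{\alpha+\beta} \prec \partial^{\alpha+\lm(\D)} = \partial^\alpha \lm(\D)$, so the required support condition holds. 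Theorem~\ref{thm-S-conf} then delivers the confluence of $\rewTheta$.
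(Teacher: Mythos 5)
Your proof is correct and follows essentially the same route as the paper: apply Theorem~\ref{thm-S-conf} to $\RTheta$ with the strategy $\SThetaL$ given by Proposition~\ref{prop:involutive_strategy}, noting that $L$-involutivity is precisely $\SThetaL$-confluence via $\partial^\alpha\lm(\D)-\partial^\alpha r(\D)=\partial^\alpha\D$. The only difference is that you explicitly verify the hypothesis $e\notin\supp(v)$ of Theorem~\ref{thm-S-conf}, which the paper leaves implicit in its termination remark; your argument for it is correct.
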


\begin{proof}
  Let $\SThetaL$ be the strategy for $\RTheta$ defined such as
  in~\eqref{equ:S-strategy}. Since rewriting rules of $\RTheta$ are of
  the form $\partial^\alpha\lm(\D)\rewTheta\partial^\alpha R(\D)$, where
  $\D\in\Theta$ and $\partial^\alpha\in\monBasis$, the assumption that
  $\Theta$ is $L$-involutive means that $\rewTheta$ is
  $\SThetaL$-confluent. By Theorem~\ref{thm-S-conf}, $\rewTheta$ is
  confluent.
\end{proof}
\smallskip

\begin{remark}
  As for term rewriting systems or \G\ bases theory, there exists a
  completion procedure in the situation of differential operators, which
  corresponds to Knuth-Bendix or Buchberger procedures. In the case of 
  the Janet example, it turns out that after a finite number of steps,
  this procedure yields the the following involutive set,
  see~\cite{MR1308976}:
  \[\overline{\Theta}=\left\{\D_1,\quad\D_2,\quad\partial_1^2\partial_2,
  \quad\partial_2^2\partial_3,\quad\partial_1^4,\quad\partial_1^2
  \partial_2\partial_3,\quad\partial_1^4\partial_3\right\}.\]
\end{remark}
\medskip

The end of this section aims to show that
axioms~\ref{it:div}--\ref{it:transitivity} in the definition of an
involutive division may be formulated in a purely rewriting language
using strategies. We fix a strategy~$S$ for $\rewTheta$. For every
$\D\in\Theta$, we say that $\lm(\D)$ {\em S-divides} the monomial
$\partial^\alpha\in\monBasis$ if $S$ contains a rewriting rule of the
form $\partial^\alpha\lm(\D)\parS\partial^\alpha r(\D)$ and we say that
the variable $\partial_i\in\Delta$ is {\em S-multiplicative} for $\D$ if
$\partial_i\lm(\D)$ is $S$-divisible by $\lm(\D)$. 
\smallskip

\begin{definition}
  A strategy $S$ for $\RTheta$ is said to be {\em involutive} if for
  every left-hand side $\partial^\alpha\lm(\D)$ of a rewriting rule in
  $S$, then $\partial^\alpha$ contains only $S$-multiplicative variables
  of $\D$.
\end{definition}
\smallskip

\begin{proposition}
  If the strategy $S$ is involutive, then the S-division satisfies
  axioms~\ref{it:div}--\ref{it:transitivity} of the definition of an
  involutive division. Moreover, if L is an involutive division on
  $\monBasis$ such that $\Theta$ is left L-autoreduced, then the
  $\SThetaL$-division is the restriction of L to $\lm(\Theta)$.
\end{proposition}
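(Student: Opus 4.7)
The plan is to verify axioms~\ref{it:div}--\ref{it:transitivity} using the combinatorial characterization of the $S$-division provided by involutiveness, and then to establish the moreover statement by unfolding the construction of $\SThetaL$.

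The first step is to reformulate the $S$-division: by definition, $\lm(\D)\divInv{S}m$ asserts the existence of a rule $\partial^\alpha\lm(\D)\parS\partial^\alpha r(\D)$ in $S$ with $m=\partial^\alpha\lm(\D)$, and involutiveness of $S$ equates this with the purely combinatorial condition that every variable appearing in $\partial^\alpha$ is $S$-multiplicative for $\D$. Crucially, this condition depends only on the \emph{set} of variables in $\partial^\alpha$, not on their exponents.

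With this reformulation in hand, axioms~\ref{it:div} and~\ref{it:unit} are immediate: divisibility holds by construction, and reflexivity follows by taking $\partial^\alpha = 1$, whose empty variable set is vacuously $S$-multiplicative, so that the trivial rule $\lm(\D)\parS r(\D)$ belongs to $S$. Axiom~\ref{it:mul} reduces to the set-theoretic equality $\mathrm{vars}(\partial^{\alpha+\beta})=\mathrm{vars}(\partial^\alpha)\cup\mathrm{vars}(\partial^\beta)$. For axiom~\ref{it:vertex}, if $\lm(\D)\divInv{S}m$ and $\lm(\D')\divInv{S}m$ with respective cofactors $\partial^\alpha,\partial^\beta$, the monomial equality $\partial^\alpha\lm(\D)=\partial^\beta\lm(\D')$ forces one of $\lm(\D),\lm(\D')$ to classically divide the other; writing (say) $\lm(\D')=\partial^\gamma\lm(\D)$, the monomial $\partial^\gamma$ is a submonomial of $\partial^\alpha$, so its variables still lie in the $S$-multiplicative set for $\D$, yielding $\lm(\D)\divInv{S}\lm(\D')$.

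Axiom~\ref{it:transitivity} is the step I expect to be the main obstacle. Given $\lm(\D)\divInv{S}\lm(\D')$ and $\lm(\D')\divInv{S}m$, one writes $\lm(\D')=\partial^\gamma\lm(\D)$ and $m=\partial^\beta\lm(\D')$ so that $m=\partial^{\gamma+\beta}\lm(\D)$; the goal is to show that every variable of $\partial^{\gamma+\beta}$ is $S$-multiplicative for $\D$. Those of $\partial^\gamma$ are so by hypothesis; for those of $\partial^\beta$, the argument exploits that $S$ has pairwise distinct left-hand sides, so the unique rule in $S$ with left-hand side $m$ can be viewed via the divisor $\lm(\D)$ with cofactor $\partial^{\gamma+\beta}$, and involutiveness applied to this interpretation transfers the $S$-multiplicativity condition from $\D'$ to $\D$.

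For the moreover part, the proof is a direct unfolding of the construction~\eqref{equ:S-strategy}: the rule $\partial^\alpha\lm(\D)\parTheta{L}\partial^\alpha r(\D)$ belongs to $\SThetaL$ if and only if $\lm(\D)\divInv{L}\partial^\alpha\lm(\D)$, so $\lm(\D)\divInv{\SThetaL}m$ if and only if $\lm(\D)\divInv{L}m$, identifying $\divInv{\SThetaL}$ with the restriction of $L$ to $\lm(\Theta)\times\monBasis$.
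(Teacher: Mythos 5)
Your reduction of the $S$-division to the purely combinatorial condition ``every variable of the cofactor $\partial^\alpha$ is $S$-multiplicative for $\D$'' is stated as an equivalence, but the definition of an involutive strategy only gives one direction (left-hand sides occurring in $S$ have multiplicative cofactors); it is precisely the converse that you need for axioms~\ref{it:unit} and~\ref{it:mul}. The paper's proof also dispatches these two axioms with ``by definition'', so the if-and-only-if reading is presumably intended, but you should make explicit that you are using it. The concrete error is in axiom~\ref{it:vertex}: your claim that the equality $\partial^\alpha\lm(\D)=\partial^\beta\lm(\D')$ forces one of $\lm(\D),\lm(\D')$ to classically divide the other is false for monomials in several commuting variables --- take $\lm(\D)=\partial_1$, $\lm(\D')=\partial_2$, $\partial^\alpha=\partial_2$, $\partial^\beta=\partial_1$. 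The paper's argument for~\ref{it:vertex} (and for~\ref{it:transitivity}) goes through a tool you do not actually exploit here, namely that $S$ is a prestrategy and so has pairwise distinct left-hand sides: the two rules $m\parS\partial^\alpha r(\D)$ and $m\parS\partial^\beta r(\D')$ share the left-hand side $m$ and therefore coincide as elements of $\basis\times V$.

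For axiom~\ref{it:transitivity} you correctly identify pairwise distinctness of left-hand sides as the relevant ingredient, but the step where ``involutiveness transfers the $S$-multiplicativity condition from $\D'$ to $\D$'' is asserted rather than proved. To view the unique rule of $S$ with left-hand side $m$ ``via the divisor $\lm(\D)$'' you must show that its right-hand side $\partial^\beta r(\D')$ equals $\partial^{\beta+\gamma}r(\D)$, i.e.\ that $\D'=\partial^\gamma\D$; this can be obtained by comparing, again via distinctness of left-hand sides, the rule witnessing $\lm(\D)\divInv{S}\lm(\D')$ with the rule $\lm(\D')\parS r(\D')$ witnessing axiom~\ref{it:unit} for $\D'$, but that comparison is absent from your text, so the transitivity argument is not closed. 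Your treatment of axioms~\ref{it:div}, \ref{it:unit}, \ref{it:mul} and of the ``moreover'' statement (unfolding~\eqref{equ:S-strategy} to identify the $\SThetaL$-division with the restriction of $L$) does match the paper's.
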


\begin{proof}
  Let us show the first assertion. Axioms~\ref{it:div}, \ref{it:vertex},
  and~\ref{it:transitivity} hold since $S$ is a strategy for $\RTheta$.
  Indeed, left-hand sides of $\RTheta$ are of the form
  $\partial^\alpha\lm(\D)$, hence~\ref{it:div}, and left-hand sides of
  elements of $S$ are pairwise distinct, hence~\ref{it:vertex}
  and~\ref{it:transitivity}. Moreover, axioms~\ref{it:unit}
  and~\ref{it:mul} hold by definition of an involutive strategy.

  Let us show the second assertion. By definition of the strategy
  $\SThetaL$ and of the $\SThetaL$-division, $\lm(\D)$ has the same set
  of multiplicative variables for $L$ and for the $\SThetaL$-division.
  Hence, a monomial~$\partial^\alpha$ is $L$-divisible by $\lm(\D)$, with
  $\D\in\Theta$, if and only if it is $\SThetaL$-divisible by $\lm(\D)$.
  That proves the assertion.
\end{proof}

\section{Conclusion and perspectives}

In this paper, we considered rewriting systems over vector spaces, where
we proposed an alternative approach to the traditional one, since we used
parallel rewriting steps. We also established some links with the
traditional approach, by giving a confluence criterion as well as a proof
of the diamond lemma, based on strategies. Finally, we showed that our
general framework may be adapted to rational Weyl algebras, where
coefficients do not commute with monomials. In particular, we proved that
an involutive set in a rational Weyl algebra induces a confluent
rewriting system on it. We now present some possible extensions of our
work.
\medskip

A first research direction is to investigate the so-called
{\em standardisation properties}~\cite{Mellies05jwklop} associated to a
rewriting strategy. Indeed, the choice of strategy is nothing but the
choice for every vector of a preferred rewriting sequence starting at 
this vector. Moreover, we have shown in Lemma~\ref{lem:strategies} that
each elementary rewriting step for a strategy has same source and target
points than a rewriting sequence involving rules that do not belong to
the strategy. The proof of this fact is based on complete developments of
residuals, that play a central role in standardisation results,
corresponding to left-hand sides of the strategy. As a particular case,
we hope to interpret Janet bases in terms of standardisation.
\medskip

A second research direction is to extend our work to other algebraic
structures than vector spaces. This looks promising since we do away with
the notion of well-formed rewriting step, specific to the vector space
case. More generally, we hope to be able to extend our results to an
arbitrary category $\mathcal C$ (satisfying some suitable condition), 
recovering abstract rewriting in the case where $\mathcal C$ is the
category of sets, and linear rewriting as presented in this work in the
case where $\mathcal C$ is the category of vector spaces. Instead of
having a set or a vector space of terms to be rewritten, one would then
have an object of terms, which would be an object of $\mathcal C$. 
\medskip

A last research direction consists in applying rewriting systems over
rational Weyl algebras to the formal analysis of systems of partial
differential equations. As mentioned above, this topic covers many kinds
of problems and many techniques coming from rewriting theory and algebra
may be used in this context. We may focus on using rewriting methods
applied to the {\em Spencer cohomology}~\cite{MR1308976}, which, roughly
speaking, provides intrinsic properties, namely {\em $2$-acyclicity} and
{\em formal integrability}, that guarantee existence of normal form power
series solutions. 

\bibliography{Source}

\begin{thebibliography}{10}

\bibitem{MR846601}
David~J. Anick.
\newblock On the homology of associative algebras.
\newblock {\em Trans. Amer. Math. Soc.}, 296(2):641--659, 1986.

\bibitem{MR1629216}
Franz Baader and Tobias Nipkow.
\newblock {\em Term rewriting and all that}.
\newblock Cambridge University Press, Cambridge, 1998.

\bibitem{MR506890}
George~M. Bergman.
\newblock The diamond lemma for ring theory.
\newblock {\em Adv. in Math.}, 29(2):178--218, 1978.

\bibitem{MR0506423}
Leonid~A. Bokut.
\newblock Imbeddings into simple associative algebras.
\newblock {\em Algebra i Logika}, 15(2):117--142, 245, 1976.

\bibitem{MR2202562}
Bruno Buchberger.
\newblock An algorithm for finding the basis elements of the residue class ring
  of a zero dimensional polynomial ideal.
\newblock {\em J. Symbolic Comput.}, 41(3-4):475--511, 2006.

\bibitem{MR2233761}
F.~Chyzak, A.~Quadrat, and D.~Robertz.
\newblock Effective algorithms for parametrizing linear control systems over
  {O}re algebras.
\newblock {\em Appl. Algebra Engrg. Comm. Comput.}, 16(5):319--376, 2005.

\bibitem{MR2667136}
Vladimir Dotsenko and Anton Khoroshkin.
\newblock Gr\"obner bases for operads.
\newblock {\em Duke Math. J.}, 153(2):363--396, 2010.

\bibitem{MR3347996}
St\'ephane Gaussent, Yves Guiraud, and Philippe Malbos.
\newblock Coherent presentations of {A}rtin monoids.
\newblock {\em Compos. Math.}, 151(5):957--998, 2015.

\bibitem{MR1627129}
Vladimir~P. Gerdt and Yuri~A. Blinkov.
\newblock Involutive bases of polynomial ideals.
\newblock {\em Math. Comput. Simulation}, 45(5-6):519--541, 1998.

\bibitem{GuiraudHoffbeckMalbos19}
Yves Guiraud, Eric Hoffbeck, and Philippe Malbos.
\newblock Convergent presentations and polygraphic resolutions of associative
  algebras.
\newblock {\em Math. Z.}, 293(1-2):113--179, 2019.

\bibitem{MR2964639}
Yves Guiraud and Philippe Malbos.
\newblock Higher-dimensional normalisation strategies for acyclicity.
\newblock {\em Adv. in Math.}, 231(3-4):2294--2351, 2012.

\bibitem{MR3742562}
Yves Guiraud and Philippe Malbos.
\newblock Polygraphs of finite derivation type.
\newblock {\em Math. Structures Comput. Sci.}, 28(2):155--201, 2018.

\bibitem{MR1044911}
A.~Kandri-Rody and V.~Weispfenning.
\newblock Noncommutative {G}r\"{o}bner bases in algebras of solvable type.
\newblock {\em J. Symbolic Comput.}, 9(1):1--26, 1990.

\bibitem{MR1072284}
Yuji Kobayashi.
\newblock Complete rewriting systems and homology of monoid algebras.
\newblock {\em J. Pure Appl. Algebra}, 65(3):263--275, 1990.

\bibitem{Mellies05jwklop}
Paul-Andr{\'{e}} Melli{\`{e}}s.
\newblock {\em Processes, Terms and Cycles: Steps on the Road to Infinity.
  Essays dedicated to Jan Willem Klop on the occasion of his 60th birthday.},
  volume 3838 of {\em Lecture Notes in Computer Science}, chapter Axiomatic
  Rewriting Theory 1 : A diagrammatic standardization theorem, pages 554--638.
\newblock {Springer-Verlag}, 2005.

\bibitem{MR1299371}
Teo Mora.
\newblock An introduction to commutative and noncommutative {G}r\"obner bases.
\newblock {\em Theoret. Comput. Sci.}, 134(1):131--173, 1994.

\bibitem{MR1308976}
J.-F. Pommaret.
\newblock {\em Partial differential equations and group theory}.
\newblock Mathematics and its Applications. Kluwer Academic Publishers Group,
  Dordrecht, 1994.

\bibitem{MR265437}
Stewart~B. Priddy.
\newblock Koszul resolutions.
\newblock {\em Trans. Amer. Math. Soc.}, 152:39--60, 1970.

\bibitem{MR920522}
Craig~C. Squier.
\newblock Word problems and a homological finiteness condition for monoids.
\newblock {\em J. Pure Appl. Algebra}, 49(1-2):201--217, 1987.

\bibitem{MR1360005}
Victor~A. Ufnarovskij.
\newblock {\em Combinatorial and asymptotic methods in algebra}.
\newblock Encyclopaedia Math. Sci. Springer, Berlin, 1995.

\bibitem{van2008confluence}
Vincent van Oostrom.
\newblock Confluence by decreasing diagrams.
\newblock {\em Theor. Comput. Sci.}, 126(2):259--280, 1994.

\end{thebibliography}

\end{document}